\definecolor{TUGreen}{rgb}{0.517,0.721,0.094}
\definecolor{TUOrange}{rgb}{1.0,0.7176,0.0}
\definecolor{BrightGray}{gray}{0.9}
\definecolor{DarkGray}{gray}{0.2}
\definecolor{white}{rgb}{1,1,1}
\definecolor{black}{rgb}{0,0,0}
\definecolor{red}{rgb}{1,0,0}
\definecolor{itemizeblue}{rgb}{0.2,0.2,0.7}
\definecolor{shorse}{rgb}{0.84,0.84,0.94}
\definecolor{shorselight}{rgb}{0.93,0.93,0.98}
\definecolor{shorsedark}{rgb}{0.76,0.76,0.91}
\newcommand{\dnode}[1]{\textnormal{d-node}(#1)}
\newcommand{\vars}[1]{\textnormal{vars}(#1)}
\newcommand{\rootset}[1]{\textnormal{root}(#1)}
\newcommand{\nodefunc}[1]{\textnormal{node}(#1)}
\newtheorem{proposition}{Proposition}
\newtheorem{theorem}{Theorem}
\newtheorem{lemma}{Lemma}
\newtheorem{corollary}{Corollary}
\newtheorem{simulation}{Simulation}
\theoremstyle{remark}
\newtheorem*{proofidea}{Proof idea}
\theoremstyle{definition}
\newtheorem{definition}{Definition}
\renewenvironment{proof}[1][Proof]{
                \begin{trivlist}
                \item \upshape \bfseries #1.
                \upshape\mdseries}
               {\nopagebreak[4]\hspace*{\fill}\mbox{$\Box$}\end{trivlist}}
\newtheorem{fact}{Fact}
\begin{document}

\title{On the relation between structured $d$-DNNFs 
 and SDDs}

\author{Beate Bollig\thanks{TU Dortmund, LS2 Informatik, Germany, Email: {\tt beate.bollig@tu-dortmund.de}} \and Martin Farenholtz\thanks{Email: {\tt martin.farenholtz@tu-dortmund.de}}}

\maketitle

\begin{abstract}

Structured $d$-DNNFs and SDDs are restricted negation normal form circuits used in
knowledge compilation as target languages into which propositional theories are compiled.
Structuredness is imposed by so-called vtrees.
By definition SDDs are restricted structured $d$-DNNFs. 
Beame and Liew (2015) as well as Bova and Szeider (2017) mentioned the question 
whether structured $d$-DNNFs are really more general than SDDs w.r.t.\ polynomial-size
representations (w.r.t.\ the number of Boolean variables the represented functions are defined on.)
The main result in the paper is the proof that a function can be represented by SDDs of polynomial
size if the function and its complement have polynomial-size
structured $d$-DNNFs that respect the same vtree.

{\bf Keywords:} complexity theory, decomposable negation normal forms, 
          knowledge compilation, sentential decision diagrams

\end{abstract}

\section{Introduction}\label{sec1}

Knowledge compilation is an area of research with a long tradition in artificial
intelligence 
(see, e.g., 
\cite{CD97} and 
\cite{Mar15}).
A key aspect of any compilation approach is the target language 
into which the propositional theory is compiled. Therefore, 
the study of representations for propositional theories has been a central 
subject. One aim is to decide whether representations can be transformed 
into equivalent ones of another representation language by only increasing the 
representation size polynomially
(see, e.g., \cite{BLR13}, \cite{BL15}, \cite{BCM16}, \cite{BS17a}, \cite{Dar11}, \cite{DM02}, \cite{OD15}, \cite{Raz15}).
We follow this direction of research.

\paragraph{{\bf Representation languages 
}}

In the following we describe informally the significant representation languages considered in this paper
(for formal definitions of the significant ones see Section \ref{sec2}).
Circuits are a powerful model for the representation of Boolean functions in small size
(w.r.t.\ the number of Boolean variables the functions are defined on).
The desire to have 
better algorithmic properties leads to restricted circuits.
A circuit with $\wedge$-gates (conjunctions), $\vee$-gates (disjunctions), and 
negation gates is in negation normal form if its negation gates are only wired by input variables.
Decomposable negation normal form circuits, or DNNFs for short, 
introduced  by Darwiche \cite{Dar01} 
require that
the subcircuits leading into each $\wedge$-gate have to be defined on 
disjoint sets of variables. This requirement is called {\it decomposability}.
Darwiche also defined deterministic DNNFs, $d$-DNNFs for short,
where the subcircuits leading into each $\vee$-gate never simultaneously evaluate 
to the function value $1$. 
Determinism allows efficient algorithms for model counting, the efficient computation 
of the number of satisfying inputs 
of a function represented by a given restricted circuit. 
Moreover, Pipatsrisawat and Darwiche defined the notion 
of structured decomposability \cite{PD08}.
For every $\wedge$-gate the subcircuits leading into the gate are not only defined on 
disjoint sets of variables but their variables have to be splitted 
w.r.t.\ a given variable tree called vtree whose leaves are labeled by Boolean variables.
Structuredness
allows the combination of 
Boolean functions by a Boolean operator in 
polynomial time.
Sentential decision diagrams, SDDs for short, introduced by Darwiche \cite{Dar11}
are restricted structured $d$-DNNFs which 
have turned out to be a promising representation language for propositional knowledge bases 
as reported, e.g., by Van den Broeck and Darwiche \cite{BD15}. 
Here, the requirement of restricted strongly deterministic decompositions generalizes the well-known Shannon decomposition. 
It ensures that for each function $f$ representable by polynomial-size SDDs 
w.r.t.\ a vtree $T$ also the negated function
$\overline{f}$ is representable by SDDs w.r.t.\ $T$ in polynomial size.
DNNFs are not only by definition more general than $d$-DNNFs \cite{BCM16} and 
it is not difficult to prove that $d$-DNNFs are strictly more general than structured $d$-DNNFs.
Beame and Liew as well as Bova and Szeider mentioned the question about the relative succinctness 
of structured $d$-DNNFs and SDDs, i.e., 
whether structured $d$-DNNFs are strictly more concise than SDDs \cite{BL15,BS17}.
To the best of our knowledge it is still open whether there exists a Boolean function 
representable by structured $d$-DNNFs in polynomial size but needs SDDs of at least 
quasipolynomial or even exponential size.
In the following we tackle this question.

\paragraph{{\bf Contribution and related work}}

Ordered binary decision diagrams, 
OBDDs for short, are well suited as data structure for Boolean functions
and have received much considerations in the verification literature 
(see, e.g., the monograph of Wegener \cite{Weg00}).
OBDDs are restricted binary decison diagrams that respect  so-called variable orderings which are lists of variables.
SDDs respect variable trees and are more general than OBDDs by definition. 
Bova was the first one who presented a Boolean function representable by 
SDDs of polynomial size but whose OBDD size is exponential \cite{Bov16}.
Later on Bollig and Buttkus showed an exponential separation between SDDs and more general BDD models \cite{BB19}.
Unambiguous nondeterministic OBDDs have at most one accepting computation path for every input.
They can be seen as restricted structured $d$-DNNFs.
Recently, 
it was proved that every Boolean function
$f$ for which $f$ and its negated function $\overline{f}$ can be represented by
polynomial-size unambiguous nondeterministic OBDDs w.r.t.\ the same variable ordering can also be
represented by SDDs of polynomial size w.r.t.\ so-called linear vtrees that contain
additionally auxiliary variables \cite{BB19}. 
Here, we generalize this result and prove that 
given
polynomial-size structured $d$-DNNFs w.r.t.\ the same vtree for $f$ and 
its negated function $\overline{f}$, the function $f$  can also be
represented by SDDs of polynomial size. 
It is not difficult to prove that there are vtrees $T$ and Boolean functions $f$ 
such that structured $d$-DNNFs for $f$ and for $\overline{f}$ w.r.t.\ $T$ are
exponentially more succinct than SDDs representing $f$ w.r.t.\ $T$
(see also Section \ref{sec3}).
Therefore, for our simulation we have to modify the given vtree
and we add extra auxiliary variables.

\paragraph{{\bf Organization of the paper}}

The rest of the paper is organized as follows.
In Section \ref{sec2} we recall the main definitions concerning 
decomposable negation normal forms, we introduce certificates, and  
we investigate how structured $d$-DNNFs alter if some of the input variables are set to constants.
For completeness we prove in Section \ref{sec3} that structured $d$-DNNFs can be exponentially
more succinct than SDDs for a given function w.r.t.\ a fixed vtree.
Section 4 contains our main result.
It is shown that every Boolean function
$f$ for which $f$ and its negated function $\overline{f}$ can be represented by
polynomial-size structured $d$-DNNFs w.r.t.\ the same vtree can also be
represented by SDDs of polynomial size w.r.t.\ a vtree that contains extra auxiliary variables.

\section{Preliminaries}\label{sec2}

In this section, we briefly recall the main notions concerning 
decomposable negation normal forms, we take a look at certificates which are minimal satisfied subcircuits,
and we investigate how some representations of Boolean functions alter if some of the input variables
are replaced by Boolean constants. 

\subsection{\bf Decomposable negation normal forms}

We assume familiarity with fundamental concepts 
on Boolean functions and circuits (otherwise see, e.g., \cite{Vol99} for more details). 
In the rest of the paper, we look at (restricted) NNFs as classes of Boolean circuits.
A {\em satisfying input} for a Boolean function $f$ is an assignment to the input variables 
whose function value is $1$, in other words this assignment is mapped to $1$ by $f$.
A Boolean function $f$ depends essentially on a variable $x$ 
if the subfunctions of $f$ obtained by replacing $x$ by the Boolean
constants are different, in other words $f_{x=0}\not= f_{x=1}$.

Many known representations of propositional knowledge bases 
are restricted negation normal form circuits (NNFs)
and correspond to specific properties on NNFs \cite{DM02}. 
Decomposability and determinism are two of these fundamental properties.

\begin{definition}[NNFs]
A {\em negation normal form circuit} on a variable set 
$X$
is a Boolean circuit over fan-in 2 conjunction gates and unbounded fan-in disjunction gates, 
labeled by $\wedge$ and $\vee$,
whose inputs are labeled by literals 
$x$ and $\overline{x}$ for $x\in X$,
and $\bot$ and $\top$ for the Boolean constants $0$ and $1$.
The {\em size} of an \textup{NNF} $\mathcal{C}$, denoted by $|\mathcal{C}|$, is the number of its gates.
The {\em \textup{NNF} size} of a Boolean function $f$ is the size of a smallest
negation normal form circuit representing $f$.
The Boolean function $f_{\mathcal{C}}:\{0,1\}^X\rightarrow \{0,1\}$ 
represented by $\mathcal{C}$ is defined in the usual way.
An \textup{NNF} is {\em decomposable}, or a \textup{DNNF} for short, 
iff the children of each $\wedge$-gate are reachable from disjoint sets of input variables.
A set of Boolean functions $\{f_1, \ldots, f_\ell\}$ on the same variable set is {\em disjoint}
if each pair of functions $f_i,f_j$, $i\not= j$, 
is not simultaneously satisfiable.
A \textup{DNNF} is {\em deterministic}, or a $d$-\textup{DNNF} for short, iff the functions 
computed at the children of each $\vee$-gate are disjoint.
\end{definition}
The size of restricted NNFs for a Boolean function can be defined
in a similar way as above.
The fact that negations only appear at variables is not really a restriction and
the NNF size is polynomially related to the circuit size of a Boolean function over the standard basis 
$\{\wedge, \vee, \neg\}$.
Our assumption that each $\wedge$-gate has only fan-in $2$ is justified because it
affects the NNF size only polynomially. 
In the following for an \textup{NNF} $\mathcal{C}$ and a gate $g$ in $\mathcal{C}$ 
the subcircuit rooted at $g$ is denoted by $C_g$.

Structured decomposability on the notion of vtrees was originally introduced
by Pipatsrisawat and Darwiche \cite{PD08}. 
Informally, the variables are considered 
in a certain way formalized as a tree structure on the variables.

\begin{definition}\label{def:vtree}
A  {\em vtree} for a finite, nonempty set of variables $X$ 
is a full, rooted binary tree 
whose leaves are in one-to-one correspondence with the variables in $X$.
\end{definition}

In the rest of the paper for a node $u$ in a graph let vars$(u)$ denote the set of variables for which a literal
appears in the subgraph rooted at $u$.

\begin{definition}\label{definition:dnnf_respecting_vtree}
Let $T$ be a vtree for the finite, nonempty set of variables $X$ and $\mathcal{D}$
be a \textnormal{(deterministic) DNNF}.
An $\wedge$-gate $u$ of $\mathcal{D}$ with children $u_l, u_r$ respects a node
$v$ of $T$ with children $v_l, v_r$ iff
$\vars{u_l} \subseteq \vars{v_l}$ and $\vars{u_r} \subseteq \vars{v_r}$.
$\mathcal{D}$ \emph{respects} the vtree $T$, if every
$\wedge$-gate $u$ of $\mathcal{D}$ respects a node $v$ in $T$.
A \textup{(}deterministic\textup{)} $\textnormal{DNNF}$
that respects a given vtree $T$ is called a \textup{(}deterministic\textup{)}
$\textnormal{DNNF}_T$.
Moreover, a \emph{structured} \textup{(}deterministic\textup{)} $\textnormal{DNNF}$
is a \textup{(}deterministic\textup{)} $\textnormal{DNNF}_T$ for an arbitrary vtree $T$.
\end{definition}

In the following let $T_v$ be the subtree of a tree $T$ rooted at a node $v$ in $T$.
If we choose for every $\wedge$-gate $u$ in a DNNF$_T$ the node $v$ in $T$ in such a way
that $u$ respects $v$ and there exists no other node $v'$ in $T$ such that $u$
respects $v'$ and $T_{v'}$ is a subtree of $T_v$, the node $v$ is unique and
we call $v$ the \emph{decomposition node} of $u$
and $\dnode{u} = v$.

Sentential decision diagrams introduced by Darwiche \cite{Dar11}
result from so-called structured decomposability and strong determinism.
Therefore, by definition they are restricted structured $d$-DNNFs. 

\begin{definition}
For a variable set $X$ let $\bot:\{0,1\}^X\rightarrow \{0,1\}$ and
$\top:\{0,1\}^X\rightarrow \{0,1\}$ denote the constant $0$ function
and constant $1$ function, respectively.
A set of Boolean functions $\{f_1, \ldots, f_\ell\}$ on the same variable set is called a {\em partition}
iff the functions $f_1, \ldots, f_\ell$ are disjoint, none of the functions is the constant $0$ function $\bot$,
and $\bigvee\limits_{i=1}^\ell f_i=\top$.
\end{definition}

\begin{definition}\label{def:sdds}
A {\em sentential decision diagram} $C$,
or \textup{SDD} for short, {\em respecting a vtree} $T$ 
is defined inductively in the following way:
\begin{itemize}
\item $C$ consists of a single node labeled by a constant representing
        $\bot$ or $\top$ and $T$ is an arbitrary vtree.
\item $C$ consists of a single node labeled by a literal on a variable $x$
        and $x$ is in the variable set of $T$.
\item The output gate of $C$ is a disjunction whose inputs are wires from $\wedge$-gates $g_1, \ldots, g_\ell$,
      where each $g_i$ has wires from $p_i$ and $s_i$,
      $v$ is an inner node in $T$ with children $v_L$ and $v_R$,
      $C_{p_1}, \ldots, C_{p_\ell}$
       are \textup{SDDs} that respect the subtree of $T$ rooted at $v_L$,
      $C_{s_1}, \ldots, C_{s_\ell}$
      are \textup{SDDs} that respect the subtree of $T$ rooted at $v_R$,
      and the functions represented by $C_{p_1}, \ldots, C_{p_\ell}$ are a partition.
\end{itemize}
An SDD is an SDD respecting some vtree.
\end{definition}

By definition SDDs are circuits with alternating $\vee$- and $\wedge$-gates.
The property that the functions $p_1, \ldots, p_\ell$ are disjoint
is also called \emph{strong determinism} and the partition property is a strengthening of 
strong determinism. 
The distinction between the left and right child of a node is crucial for SDDs but not for structured $d$-DNNFs.
Xue, Choi, and Darwiche showed that switching the left and right child of a vtree node
may lead to an exponential change in the size of the corresponding SDDs for 
a given function \cite{XCD12}. Obviously, as a result we can conclude that for a function $f$
the size of a structured $d$-DNNF representing $f$ can be exponentially more succinct 
than its SDD size for a fixed vtree. In the next section we show that strong determinism is really 
a strengthening of determinism w.r.t.\ polynomial-size representations even without switching left and right
children in a vtree.

Some applications require their input circuits to be {\it smooth}, i.e., all children of an $\vee$-gate
mention the same set of variables. The notion of smoothness was first introduced by Darwiche \cite{Dar01a}. Recently,
smoothness was also considered for structured decomposable circuits \cite{SBB19}.

\begin{definition}
A circuit is {\em smooth} if for every pair of children $u_1$ and $u_2$ of an $\vee$-gate $u$, the set of variables 
$\vars{u_1}$ is equal to $\vars{u_2}$.
\end{definition}

A smoothing algorithm transforms a given circuit to a smooth circuit that represents the same Boolean function.
Since there exist smoothing algorithm for structured $d$-DNNFs that maintains the same vtree running in 
polynomial time \cite{SBB19}, we assume that the considered circuits are smooth in the rest of the paper.
For smooth structured $d$-DNNFs the notion of decomposition nodes can easily be extended for $\vee$-nodes.
Let $u$ be an $\vee$-node with children $u_1, \ldots, u_k$ then the decomposition node of $u$ is equal to the 
decomposition node of $u_1, \ldots, u_k$. 
Furthermore, decomposition nodes of nodes labeled by a literal are the corresponding leaves in the considered vtree,
respectively.

\subsection{\bf Certificates}

Bova et al  showed how to apply results from communication complexity to prove lower bounds on the size of (deterministic)
 DNNFs \cite{BCM16}. The main idea is that for a function $f$ this size is lower bounded by 
the size of a so-called balanced (disjoint) rectangle cover for $f$. For this result they considered certificates
for satisfying assignments w.r.t.\ a given (deterministic) DNNF $\mathcal{D}$. Informally,
these certificates are minimal satisfied subcircuits of $D$ that contain the output gate of $\mathcal{D}$.

\begin{definition}[\cite{BCM16}]
        Let $\mathcal{D}$ be a $\textnormal{DNNF}$ for the set of variables $X$. A \emph{certificate} of $\mathcal{D}$ is a $\textnormal{DNNF}$ $\mathcal{C}$ for $X$ with the following properties:
        \begin{itemize}
                \item[\textup{(i)}] The \textnormal{DNNF} $\mathcal{C}$ is a connected subgraph of $\mathcal{D}$.
                \item[\textup{(ii)}] The roots \textup{(}output gates\textup{)} of $\mathcal{C}$ and $\mathcal{D}$ coincide.
                \item[\textup{(iii)}] If $\mathcal{C}$ contains an $\wedge$-gate $u$, $\mathcal{C}$ also contains each child node $v$ of $u$ and the edge $(u,v)$.
                \item[\textup{(iv)}] If $\mathcal{C}$ contains an $\vee$-gate $u$, $\mathcal{C}$ also contains exactly one of the child nodes $v$ of $u$ and the edge $(u,v)$.
        \end{itemize}
\end{definition}

Since the fan-in of $\vee$-gates is restricted by $1$ and 
because of the decomposability of $\mathcal{D}$,
 a certificate can be seen as a rooted binary tree where each leaf is labeled by a different variable of $X$.
Now, we define $1$-certificates in order to represent sets of satisfying inputs of a given $\textnormal{DNNF}$
(for $1$-certificates see also \cite{BB19a}).

\begin{definition}
        A $1$\emph{-certificate} is a certificate with the following modifications: 
each leaf labeled by a literal $x$ is a decision node labeled by $x$ whose only outgoing edge labeled by $1$ 
leads to the $1$-sink and each leaf labeled by a literal $\overline{x}$ is a decision node labeled by $x$ 
whose only outgoing edge labeled by $0$ leads to the $1$-sink.
\end{definition}

A $1$-certificate represents all assignments to the input variables where the labels of outgoing edges of
decision nodes are chosen as assignments for the corresponding variables. Since a $1$-certificate does
not have to contain a decision node for each input variable, the represented set of assignments to
all variables 
can contain more than one element.
Obviously, according to the definition of $1$-certificates, each $\vee$- and $\wedge$-gate
evaluates to $1$ for an assignment of the represented set.
Since the roots of a $1$-certificate and a given $\textnormal{DNNF}$ coincide because of the second 
requirement in the definition of certificates,
this set of assignments is also satisfying for the given $\textnormal{DNNF}$.
In the deterministic case, there is a one-to-one correspondence between $1$-certificates and subsets 
of satisfying assignments for 
the function represented by the given DNNF.
We know the following fact \cite{BB19a}.
\begin{fact}\label{fact:certificates}
If $\mathcal{D}$ is a deterministic $\textnormal{DNNF}$ representing a Boolean function 
$f_{\mathcal{D}}: \{0,1\}^n \rightarrow \{0,1\}$, then 
for each satisfying assignment $b \in \{0,1\}^n$ of $f_{\mathcal{D}}$ 
there is a unique $1$-certificate of $\mathcal{D}$ representing $b$.
\end{fact}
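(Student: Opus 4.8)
The plan is to prove Fact~\ref{fact:certificates}, which asserts a one-to-one correspondence between satisfying assignments of a deterministic DNNF and its $1$-certificates. The statement has two halves: existence of at least one $1$-certificate for each satisfying assignment, and uniqueness. The natural approach is a top-down inductive construction on the DAG structure of $\mathcal{D}$, exploiting decomposability for existence and determinism for uniqueness.

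\textbf{Existence.} Fix a satisfying assignment $b\in\{0,1\}^n$, so $f_{\mathcal{D}}(b)=1$. First I would prove by structural induction (from the inputs upward) that the set of gates evaluating to $1$ under $b$ contains the output gate and is closed under the following selection procedure. Starting from the root, I build a subgraph $\mathcal{C}$ by the rules in the certificate definition: whenever the construction reaches an $\wedge$-gate, I include both children together with the connecting edges (by definition of conjunction both children must evaluate to $1$ under $b$, so the recursion stays inside the $1$-evaluating gates); whenever it reaches an $\vee$-gate evaluating to $1$, at least one child evaluates to $1$, so I may select one such child and its edge. Replacing each reached literal leaf by the corresponding decision node as in the definition of $1$-certificates yields a $1$-certificate; the chosen outgoing edge labels are consistent with $b$ because the literal is satisfied by $b$ exactly when the prescribed edge value matches $b$'s value on that variable. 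Hence the constructed $\mathcal{C}$ represents $b$, establishing existence.

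\textbf{Uniqueness.} This is where determinism enters, and I expect it to be the crux. Suppose $\mathcal{C}_1$ and $\mathcal{C}_2$ are two $1$-certificates both representing $b$. I would argue by induction on the structure, showing that at every $\vee$-gate reached by both certificates the \emph{same} child must be selected. At an $\vee$-gate $u$, each of $\mathcal{C}_1,\mathcal{C}_2$ picks exactly one child; were these children distinct, both would have to evaluate to $1$ on $b$ (since each selected subcircuit is satisfied by $b$), contradicting determinism, which forbids two children of an $\vee$-gate from being simultaneously satisfiable. Thus the selections coincide at every $\vee$-gate, and since at $\wedge$-gates both children are forced, the two subgraphs $\mathcal{C}_1$ and $\mathcal{C}_2$ are identical. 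I should also note why every variable appearing in the certificate gets a consistent, well-defined decision edge: decomposability guarantees the two subcircuits feeding an $\wedge$-gate share no variables, so no variable is assigned conflicting edge labels within a single certificate, and the tree-shape of the certificate (noted in the excerpt) makes each leaf correspond to a distinct variable.

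\textbf{Main obstacle and care points.} The genuine subtlety is the precise meaning of ``representing $b$'': a $1$-certificate need not mention every variable, so it represents a \emph{cube} of assignments rather than a single point. The uniqueness claim is therefore that the $1$-certificate whose cube contains $b$ is unique, and the determinism argument must be applied with respect to the assignment $b$ rather than to the (larger) represented set. I would make this explicit to avoid circularity: at each $\vee$-gate the two candidate children are both satisfied by the concrete point $b$, and it is this simultaneous satisfaction at $b$ that determinism rules out. The remaining steps—closure of the $1$-evaluating set, consistency of edge labels with $b$, and the tree structure of certificates—are routine consequences of the definitions and I would treat them briefly.
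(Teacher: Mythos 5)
Your proof is correct. Note that the paper itself does not prove this fact at all --- it is stated with a citation to prior work (the reference \cite{BB19a}) and only accompanied by the informal remark that every gate of a $1$-certificate evaluates to $1$ on any assignment of its represented set --- so there is no in-paper argument to compare against. Your argument (existence by top-down selection of a satisfied child at each $\vee$-gate using that both children of a satisfied $\wedge$-gate must be satisfied, uniqueness by first showing bottom-up that all gates of a certificate containing $b$ in its cube evaluate to $1$ at $b$ and then invoking determinism to force the same child choice at every shared $\vee$-gate) is the standard proof of this statement, and your explicit handling of the cube-versus-point subtlety in the meaning of ``representing $b$'' is exactly the right care point.
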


\subsection{\bf Pruned vtrees and pruned structured $d$-DNNFs}

Beame and Liew showed how lower bounds on 
the SDD size for a Boolean function can be proved by
deterministic two-way best-partition communication complexity
\cite{BL15}.
For this reason they defined a small generalization of vtrees suitable for describing SDDs with 
respect to partial assignments. 

\begin{definition}[\cite{BL15}]
A \emph{pruned vtree} on a variable set $X$ is a full, rooted binary tree 
whose leaves are either marked \emph{stub} or by a variable in $X$
and whose leaves marked by variables are in one-to-one correspondence with the variables in $X$.
\end{definition}

Obviously, any vtree on a variable set $X$ is by definition also a pruned vtree on $X$.

For a vtree $T$ on a variable set $X$ and $A\subseteq X$ the pruned vtree $T_A$ can be constructed 
as follows. For each vertex $v$ in $T$, the subtree rooted at $v$ is replaced by a stub iff 
$\vars{v}\subseteq A$ and $\vars{\textup{parent}(v)} \not\subseteq A$, where parent$(v)$ is the unique 
parent of $v$ in $T$.

Next, we define pruned SDDs in a slightly modified form than Beame and Liew.

\begin{definition}
A {\em pruned} \textup{SDD} $C$ respecting a pruned vtree $T$
is defined inductively in the following way:
\begin{itemize}
\item $C$ consists of a single node labeled by a constant representing
        $\bot$ or $\top$ and $T$ is an arbitrary pruned vtree.
\item $C$ consists of a single node labeled by a literal on a variable $x$
      and $x$ is in the variable set of the pruned vtree $T$.
\item The output gate of $C$ is a disjunction whose inputs are wires 
      from $\wedge$-gates $g_1, \ldots, g_\ell$,
      where each $g_i$ has wires from $p_i$ and $s_i$,
      $v$ is an inner node in the pruned vtree $T$ with children $v_L$ and $v_R$,
      $C_{p_1}, \ldots, C_{p_\ell}$
       are pruned \textup{SDDs} that respect the subtree of $T$ rooted at $v_L$,
      $C_{s_1}, \ldots, C_{s_\ell}$
      are pruned \textup{SDDs} that respect the subtree of $T$ rooted at $v_R$,
      and the functions represented by $C_{p_1}, \ldots, C_{p_\ell}$ are mutually disjoint.
\end{itemize}

A {\em pruned SDD} is a pruned \textup{SDD} respecting some pruned vtree.
\end{definition}

Note that a pruned SDD w.r.t.\ a pruned vtree that is a stub represents always $\bot$ or $\top$.

Using pruned vtrees Beame and Liew investigated how SDDs simplify under partial 
assignments to the input variables. 
Let $ \mathcal{D}$ be an SDD representing a Boolean function $f$ w.r.t.\ the vtree $T$ on the set of variables $X$.
Furthermore, let $A\subseteq X$ and $p$ be 
an assignment to the variables in $A$. 
The pruned SDD $\mathcal{D}_p$ is constructed in the following way. 
For each leaf labeled by a literal $\ell$, where $\ell$ or $\overline{\ell}$ are in $A$,
we replace $\ell$ by $\top$ iff $\ell$ is fullfilled by $p$ and by $\bot$ otherwise.
Any children of $\vee$-gates that compute $\bot$ 
can be eliminated. If a gate computes a constant function under the assignment $p$, 
the outgoing edges of the gate can be replaced by the corresponding constant.
All nodes that are not any longer connected to the root of $\mathcal{D}$ can be eliminated.

It is not difficult to see that $\mathcal{D}_p$ is a subgraph of $\mathcal{D}$. Moreover, $\mathcal{D}_p$ is a
pruned SDD that respects the pruned vtree $T_A$ and $\mathcal{D}_p$ represents the subfunction of $f$ 
obtained by replacing the variables in $A$ by the assignment $p$ \cite{BL15}.
Pruned structured $d$-DNNFs can be obtained in a similar way as pruned SDDs. 

The following restricted partial assignments are crucial for our 
transformation from structured $d$-DNNFs into SDDs.

\begin{definition}[\cite{BL15}]
For a node $v$ in a vtree $T$ on the set of variables $X$ 
let $shell(v)$ denote the variables in $X\setminus \vars{v}$.

For $A\subseteq X$ we call $\{A, X\setminus A\}$ a \emph{shell partition} for $X$ if there is a vertex $v$ 
in the vtree $T$ such that $shell(v)=A$. An assignment to the variables in $A$ is called
a \emph{shell restriction} (w.r.t.\ $v$).
\end{definition}

Let $\mathcal{D}$ be a (pruned) 
SDD respecting a (pruned) 
vtree $T$ and let $v$ be a node in $T$. 
$SDD(\mathcal{D},v)$ denote the set of all the (pruned) SDDs in $\mathcal{D}$ that respect the (pruned)
subtree of $T$ 
rooted at $v$. In other words $SDD(\mathcal{D},v)$ contains
all (pruned) SDDs in $\mathcal{D}$ for which the decomposition node of the root is $v$.

\begin{proposition}[\cite{BL15}]\label{propos:sdds}
Let $\mathcal{D}$ be an \textup{SDD} respecting a vtree $T$ 
and let $v$ be a node in $T$. 
For every shell restriction $p$ w.r.t.\ $v$ the functions represented
by the pruned \textup{SDDs} in $SDD(\mathcal{D}_p,v)$ are mutually disjoint.
\end{proposition}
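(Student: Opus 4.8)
The plan is to argue by contradiction, using the uniqueness of $1$-certificates in the deterministic case (Fact~\ref{fact:certificates}). Suppose two \emph{distinct} pruned SDDs $C\neq C'$ in $SDD(\mathcal{D}_p,v)$ shared a satisfying assignment, i.e.\ some $a\in\{0,1\}^{\vars{v}}$ satisfies both. Since $C$ and $C'$ mention only variables in $\vars{v}$, none of which lie in $shell(v)$, their functions are untouched by the shell restriction, so $a$ already satisfies them in $\mathcal{D}$. Writing $b=(a,p)$ for the induced full assignment on $X$, I would build from $C$ and from $C'$ two \emph{different} $1$-certificates of $\mathcal{D}$ for the \emph{same} input $b$, which is exactly what Fact~\ref{fact:certificates} forbids.

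The first ingredient is a pruning-semantics lemma. Fix a connection from the root of $\mathcal{D}_p$ down to $C$. Because $\dnode{C}=v$, the decomposition nodes met along this connection are precisely the vtree nodes on the path from the root to $v$, and at every $\wedge$-gate on the connection the \emph{off-path} child respects the sibling subtree peeled off at that step, whose variables lie entirely in $shell(v)$. Hence each such off-path child is collapsed to a Boolean constant by $p$; and since its $\wedge$-gate survived pruning it cannot have become $\bot$, so it is forced to $\top$, i.e.\ it is satisfied by $p$. Smoothness guarantees that these sibling subtrees together cover exactly $shell(v)$.

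Using this lemma I would assemble, for the input $b$, a genuine $1$-certificate of $\mathcal{D}$ containing the node $C$: follow the root-to-$C$ connection in the original circuit, attach to $C$ its own $1$-certificate for $a$ (which exists by Fact~\ref{fact:certificates} applied to the SDD $C$), and attach to each off-path sibling its own $1$-certificate for $p$. By decomposability the variable sets involved are pairwise disjoint and, by smoothness, cover all of $X$, so the result is a connected subtree meeting the certificate conditions in which every chosen gate is satisfied by $b$; hence it is a valid $1$-certificate representing $b$, and likewise for $C'$. To see that the two certificates differ I would prove a second, structural lemma: a single $1$-certificate contains \emph{at most one} node whose decomposition node is $v$. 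A certificate is a tree in which, by decomposability, the variables are partitioned among the leaves, and two nodes with decomposition node $v$ would both have to carry the nonempty set $\vars{v}$; this is impossible for nodes in sibling branches of a common $\wedge$-gate, and in a reduced smooth diagram also for an ancestor--descendant pair. Thus the certificate through $C$ has $C$ as its unique decomposition-$v$ node while the one through $C'$ has $C'$, so the two are distinct $1$-certificates of the same input $b$, contradicting Fact~\ref{fact:certificates}.

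The step I expect to be the main obstacle is the assembly above: one must verify that \emph{surviving} in $\mathcal{D}_p$ is exactly what forces the off-path context to be satisfied by $p$, and that the pieces glue into a subtree that is simultaneously connected, obeys the $\vee$-gate selection rule, and is consistent with one single input $b$. The secondary difficulty is the uniqueness-of-$v$-node lemma, whose only delicate case, an ancestor--descendant pair both carrying $\vars{v}$, has to be ruled out via smoothness together with the reducedness of the diagram.
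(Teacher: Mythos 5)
Your overall strategy is exactly the paper's: assume two SDDs in $SDD(\mathcal{D}_p,v)$ with distinct roots $u_1\neq u_2$ share a satisfying assignment $a$ on $\vars{v}$, combine $a$ with the shell restriction $p$ into a full input $b=p+a$, exhibit two $1$-certificates of $\mathcal{D}$ for $b$, one containing $u_1$ and one containing $u_2$, and contradict the uniqueness in Fact~\ref{fact:certificates}. Your assembly of these certificates (every off-path child of an $\wedge$-gate on the root-to-$C$ connection has all its variables in $shell(v)$, hence collapses under $p$ to a constant, which must be $\top$ since the connection survives pruning, and therefore carries its own $1$-certificate for $p$) is a correct elaboration of what the paper compresses into a single sentence, and the coverage-of-$X$ remark is not even needed, since a $1$-certificate need not mention every variable.

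The gap is in your distinctness lemma, which is false as stated: a $1$-certificate can contain more than one node whose decomposition node is $v$, because in an SDD an $\vee$-gate and the $\wedge$-child selected by the certificate always share the same decomposition node. The supporting claim is also incorrect: $\dnode{u}=v$ does not force $\vars{u}=\vars{v}$; it only forces $\vars{u}\subseteq\vars{v}$ with $\vars{u}$ meeting both $\vars{v_L}$ and $\vars{v_R}$, and neither smoothness nor reducedness (which is not a hypothesis here) changes this, so the ancestor--descendant case cannot be dismissed the way you propose. What you actually need is the weaker statement that no single $1$-certificate contains two \emph{distinct roots} of SDDs in $SDD(\mathcal{D}_p,v)$, and this is where the paper's one structural observation does the work: such roots are $\vee$-gates or terminals, and in an SDD respecting $T_v$ every node strictly below the $\wedge$-level has its decomposition node strictly inside $T_{v_L}$ or $T_{v_R}$, so neither root can be a descendant of the other (the paper's ``no SDD in $SDD(\mathcal{D}_p,v)$ can be part of another''); and if the two roots sat in sibling branches of an $\wedge$-gate of the certificate, their decomposition nodes would lie in disjoint subtrees of $T$ and hence could not both equal $v$. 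Replacing your lemma by this restricted statement repairs the argument, after which your proof coincides with the paper's.
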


Here, we present a simpler alternative proof as the one of Beame and Liew \cite{BL15}.
\begin{proof}
The proof can easily be done by contradiction. Let $p$ be a shell restriction w.r.t.\ $v$ and let $f_1$ and $f_2$
be two functions represented by two different SDDs in $SDD(\mathcal{D}_p,v)$ that are not mutually disjoint.
Since the roots of the SDDs in $SDD(\mathcal{D}_p,v)$ have the same decomposition node, no SDD in
$SDD(\mathcal{D}_p,v)$ can be part of another SDD in $SDD(\mathcal{D}_p,v)$.
Let $u_i$ be the root of the SDD representing $f_i$ for $i\in \{1,2\}$.
Furthermore, let $r$ be an assignment to the variables in $\vars{v}$ which is mapped by $f_1$ and $f_2$ to $1$.
Now, let $p+r$ be the joint assignment to $X$ obtained from $p$ and $r$. 
It follows that there are two $1$-certificates for $p+r$ in the given SDD $\mathcal{D}$, 
one that contains $u_1$ and another one that contains $u_2$
 in contradiction to the determinism property
(see also Fact \ref{fact:certificates}).
\end{proof}

Strong determinism is not necessary for Proposition \ref{propos:sdds} 
because determinism is sufficient for Fact \ref{fact:certificates}. Therefore, we can generalize
Proposition \ref{propos:sdds} for structured $d$-DNNFs. Since the output gate of a structured $d$-DNNF 
can also be a conjunction, we have to modify the notation. 

\begin{definition}
Let $\mathcal{D}$ be a structured \textup{$d$-DNNF} w.r.t.\ a vtree $T$. 
Furthermore, let $v$ be an arbitrary node in $T$ and let $p$ be a shell restriction w.r.t.\ $v$, where 
$shell(v)=A$.
$\mathcal{D}_p$ denotes the pruned structured \textup{$d$-DNNF} obtained from $\mathcal{D}$ by the restriction $p$.
A node in $D_p$ is in the set $R(D_p,v)$ if its decomposition node in the pruned vtree $T_A$ is $v$.
Moreover, the set $R^+(\mathcal{D}_p,v)$ contains all nodes in $R(\mathcal{D}_p,v)$
for which there exists no child node in $R(\mathcal{D}_p,v)$.
The set $d$-$DNNF(\mathcal{D}_p,v)$ contains all pruned structured \textup{$d$-DNNFs} in $\mathcal{D}_p$
rooted at a node in $R^+(\mathcal{D}_p,v)$.
\end{definition}
Now, the following proposition can be proved in a similar way as Proposition \ref{propos:sdds}.

\begin{proposition}\label{propos:d-DNNF}
Let $\mathcal{D}$ be a structured $d$-\textup{DNNF} w.r.t.\ a vtree $T$ and $p$ 
be a shell restriction w.r.t.\ a node $v$ in $T$.
Then all functions represented by a pruned structured \textup{$d$-DNNF} in $d$-$DNNF(\mathcal{D}_p,v)$ 
are mutually disjoint.
\end{proposition}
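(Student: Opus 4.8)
The plan is to mimic the contradiction argument used for Proposition~\ref{propos:sdds}, replacing the appeal to strong determinism by the weaker uniqueness statement for $1$-certificates in deterministic DNNFs (Fact~\ref{fact:certificates}), which is all that the SDD proof really uses. So suppose, for contradiction, that two distinct pruned structured $d$-DNNFs in $d$-$DNNF(\mathcal{D}_p,v)$, rooted at nodes $u_1,u_2\in R^+(\mathcal{D}_p,v)$, represent functions $f_1,f_2$ that are not mutually disjoint. Writing $A=shell(v)=X\setminus\vars{v}$, the restriction $p$ fixes exactly the variables outside $\vars{v}$, so both $f_1$ and $f_2$ are functions on $\vars{v}$, and I may fix an assignment $r$ to $\vars{v}$ with $f_1(r)=f_2(r)=1$. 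The goal is to produce two \emph{distinct} $1$-certificates of the original $d$-DNNF $\mathcal{D}$ for the joint assignment $p+r$, which contradicts Fact~\ref{fact:certificates}.

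The first preparatory step is to record that $u_1$ and $u_2$ are incomparable in $\mathcal{D}$ (neither lies below the other); this is the analogue of the observation in the SDD proof that no SDD of $SDD(\mathcal{D}_p,v)$ is part of another. By smoothness, the decomposition node of an $\vee$-gate coincides with that of its children, so a node of $R^+(\mathcal{D}_p,v)$ is necessarily an $\wedge$-gate (or a leaf); its children therefore have decomposition nodes strictly below $v$ in $T_A$, and since decomposition nodes do not ascend as one moves down the circuit, no proper descendant of a node in $R^+(\mathcal{D}_p,v)$ has decomposition node $v$. Hence $u_1$ is not a descendant of $u_2$ and vice versa.

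Next I would build, for $i\in\{1,2\}$, a $1$-certificate of $\mathcal{D}$ for $p+r$ that contains $u_i$. Since $u_i$ survives the pruning it is reachable from the root of $\mathcal{D}_p$; fix such a path $w_0=\text{root},w_1,\dots,w_k=u_i$. Because every $\wedge$-gate $w_j$ on this path has a decomposition node $v'$ that is an ancestor of $v$, structuredness forces its off-path sibling to be built on the variables of the child of $v'$ not containing $v$; these are disjoint from $\vars{v}$ and hence are shell variables, so $p$ turns the sibling into a constant, and survival of $u_i$ forces that constant to be $1$. Running up the path from $u_i$ one then checks that every $w_j$ evaluates to $1$ under $p+r$: at an $\vee$-gate because its chosen child does, at an $\wedge$-gate because the spine child does (by induction) while the shell sibling evaluates to $1$ under $p$. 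This simultaneously shows that $p+r$ satisfies $f_{\mathcal{D}}$ and yields the desired certificate: follow the spine, picking the spine child at each $\vee$-gate and taking both children at each $\wedge$-gate while completing each shell sibling by a $1$-certificate for $p$ (Fact~\ref{fact:certificates}), and at $u_i$ attach a $1$-certificate for $r$ of the deterministic sub-DNNF rooted at $u_i$, which exists because $f_i(r)=1$.

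It remains to see that the two certificates are distinct. A certificate is a tree that includes both children at every $\wedge$-gate, so if a single certificate contained both $u_1$ and $u_2$ they would either lie on one root-to-leaf path---impossible by incomparability---or in two different branches of a common $\wedge$-gate; the latter is ruled out by decomposability, since $u_1$ and $u_2$ both carry variables from $\vars{v}$. Thus the certificate through $u_1$ omits $u_2$ and the one through $u_2$ omits $u_1$, so they are two different $1$-certificates for the same satisfying assignment $p+r$, contradicting Fact~\ref{fact:certificates}. I expect the verification in the third step---that survival under pruning really forces the shell siblings along the path to evaluate to $1$, so that the spine can be completed into a genuine $1$-certificate of $\mathcal{D}$---to be the main obstacle; everything else transfers almost verbatim from the proof of Proposition~\ref{propos:sdds}.
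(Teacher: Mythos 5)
Your overall route is exactly the paper's: the paper disposes of this proposition in one sentence (it ``can be proved in a similar way as Proposition~\ref{propos:sdds}''), and that proof is precisely the argument you mimic --- assume two non-disjoint functions $f_1,f_2$ at $u_1,u_2\in R^+(\mathcal{D}_p,v)$, pick $r$ with $f_1(r)=f_2(r)=1$, produce two distinct $1$-certificates of $\mathcal{D}$ for $p+r$, and contradict Fact~\ref{fact:certificates}. However, two of the concrete claims you introduce to make this rigorous are false for structured $d$-DNNFs as the paper defines them, and they fail simultaneously on one small configuration. Take the vtree $T$ with root $z$, children $v$ and the leaf $x_3$, where $v$ has the leaf children $x_1$ and $x_2$, and let $\mathcal{D}$ have root $m\wedge x_3$ with $m=u_1\wedge u_2$, $u_1=x_1\wedge\top$, $u_2=x_2\wedge\top$. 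This circuit is decomposable, structured w.r.t.\ $T$, trivially deterministic and smooth (it has no $\vee$-gates), and even simple in the paper's sense. Under the shell restriction $p\colon x_3=1$ nothing is cut away, and because of the $\top$-children the minimal vtree node respected by $u_1$ (resp.\ $u_2$) is $v$ itself, although $\vars{u_1}=\{x_1\}$ does not span $v$; one gets $R(\mathcal{D}_p,v)=\{m,u_1,u_2\}$ and $R^+(\mathcal{D}_p,v)=\{u_1,u_2\}$, whose functions $x_1$ and $x_2$ are not mutually disjoint.

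This configuration breaks your proof at exactly the two places where you supply detail. First, the spine from the root to $u_1$ passes through the $\wedge$-gate $m$, whose decomposition node is $v$ itself and not a strict ancestor of $v$, and whose off-spine child $u_2$ is built on variables of $\vars{v}$ rather than on shell variables; so the claim that pruning turns every off-spine sibling into the constant $1$ is not available. Second, your distinctness step --- that $u_1$ and $u_2$ cannot sit in different branches of a common $\wedge$-gate ``by decomposability, since both carry variables from $\vars{v}$'' --- is a non sequitur: decomposability only forbids a \emph{common} variable, and here $u_1,u_2$ carry disjoint subsets of $\vars{v}$ and do sit in different branches of $m$; consequently your two certificates coincide and no contradiction arises (none can, since the statement itself fails on this instance). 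The missing idea is an additional normalization forbidding $\wedge$-gates with constant (empty-variable-set) children; this is implied neither by smoothness nor by the paper's notion of simplicity, and the paper's own one-line proof needs it just as much as yours does. Once such gates are eliminated (a polynomial-time rewriting), every node with decomposition node $v$ has variables meeting both children of $v$; then your first claim becomes true (the side of any $\wedge$-split containing such a node must correspond to a vtree ancestor of $v$, so the opposite side lies in the shell), and your second claim becomes true as well, but for the right reason --- structuredness, not decomposability: two variable sets each meeting both children of $v$ cannot be separated by any $\wedge$-split that respects the vtree --- and with these repairs your construction goes through.
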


The next observation is similar to the one for SDDs.
Let $\mathcal{D}$ be a structured $d$-DNNF w.r.t.\ a vtree $T$ and let $v$ be a node in $T$ with $shell(v)=A$. 
Furthermore, let $p$ be a shell restriction w.r.t.\ $v$.
If $f$ is the Boolean function represented by $\mathcal{D}$,
the disjunction of all functions represented by structured $d$-DNNFs in $d$-$DNNF(\mathcal{D}_p,v)$ 
is equal to the subfunction 
of $f$ obtained by replacing the variables in $A$ 
according to $p$.
In the following let $f_{A,p}$ denote this subfunction.

\begin{proposition}\label{propos:partition}
Let $\mathcal{D}$ be a structured $d$-\textup{DNNF} w.r.t.\ a vtree $T$ representing a Boolean function $f$
and $p$ be a shell restriction w.r.t.\ a node $v$ in $T$. Furthermore, let 
$\overline{ \mathcal{D}}$ be a structured $d$-\textup{DNNF} w.r.t.\ $T$ representing the Boolean function $\overline{f}$.
Then all functions represented by a pruned structured \textup{$d$-DNNF} in $d$-$DNNF(\mathcal{D}_p,v)$ or
in $d$-$DNNF(\overline{\mathcal{D}}_p,v)$ are mutually disjoint. Moreover, assuming that none of the functions is equal to the constant
function $\bot$, the set of all these functions is a partition.
\end{proposition}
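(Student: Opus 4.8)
The plan is to reduce everything to Proposition~\ref{propos:d-DNNF} together with the observation stated just above it, namely that the disjunction of all functions represented by the pruned structured $d$-DNNFs in $d$-$DNNF(\mathcal{D}_p,v)$ equals the subfunction $f_{A,p}$, and likewise the disjunction of all functions in $d$-$DNNF(\overline{\mathcal{D}}_p,v)$ equals $\overline{f}_{A,p}$, the subfunction of $\overline{f}$ obtained by applying $p$ to the variables in $A=shell(v)$. Since $A=shell(v)=X\setminus\vars{v}$, after restricting by $p$ all functions in sight are defined on the common variable set $\vars{v}$, so disjointness is meaningful and can be compared across the two sets.

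First I would dispose of the pairs of functions that live inside the same set. Applying Proposition~\ref{propos:d-DNNF} to $\mathcal{D}$ and to $\overline{\mathcal{D}}$ separately shows that the functions in $d$-$DNNF(\mathcal{D}_p,v)$ are mutually disjoint and that the functions in $d$-$DNNF(\overline{\mathcal{D}}_p,v)$ are mutually disjoint. Hence the only pairs still requiring an argument are the \emph{mixed} ones, consisting of a function $f_1$ from $d$-$DNNF(\mathcal{D}_p,v)$ and a function $g_1$ from $d$-$DNNF(\overline{\mathcal{D}}_p,v)$.

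The key step is the observation that taking a subfunction commutes with complementation: for every assignment $r$ to the variables in $\vars{v}$ the joint assignment $p+r$ satisfies $\overline{f}(p+r)=1-f(p+r)$, so $\overline{f}_{A,p}=\overline{f_{A,p}}$ as functions on $\vars{v}$, and in particular $f_{A,p}$ and $\overline{f}_{A,p}$ are disjoint. Each function of a set is one of the disjuncts of the disjunction equal to $f_{A,p}$ (respectively $\overline{f}_{A,p}$), so every such $f_1$ implies $f_{A,p}$ and every such $g_1$ implies $\overline{f}_{A,p}$. Therefore $f_1\wedge g_1$ implies $f_{A,p}\wedge\overline{f}_{A,p}=\bot$, i.e.\ $f_1$ and $g_1$ are not simultaneously satisfiable. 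Together with the previous paragraph this yields mutual disjointness of the whole collection.

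For the \emph{moreover} part it then suffices to compute the disjunction of all functions in both sets: grouping the disjuncts according to their set, it equals $f_{A,p}\vee\overline{f}_{A,p}=\top$. Combined with mutual disjointness and the hypothesis that none of the functions equals $\bot$, this is exactly the definition of a partition, which finishes the proof. I expect no serious obstacle here; the only point that needs care is the bookkeeping ensuring that, after restricting by the shell assignment $p$, every function under consideration is defined on the single variable set $\vars{v}$, so that the complementation identity and the disjointness claims all refer to the same domain.
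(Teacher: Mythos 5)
Your proof is correct and follows essentially the same route as the paper's own argument: Proposition~\ref{propos:d-DNNF} handles pairs within each set, disjointness of the subfunctions $f_{A,p}$ and $\overline{f}_{A,p}$ handles the mixed pairs, and the identity $f_{A,p}\vee\overline{f}_{A,p}=\top$ together with the no-$\bot$ assumption yields the partition property. Your write-up is merely more explicit than the paper's (e.g., spelling out that each function in a set implies the corresponding subfunction), but no new idea or different decomposition is involved.
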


\begin{proof}
We assume that none of the functions represented in 
$d$-$DNNF(\mathcal{D}_p,v)$ or $d$-$DNNF(\overline{\mathcal{D}}_p,v)$ is the constant function $\bot$. 
The disjunction of all the functions represented
in $d$-$DNNF(\mathcal{D}_p,v)$ 
($d$-$DNNF(\overline{\mathcal{D}}_p,v)$)  is equal to $f_{A,p}$ ($\overline{f}_{A,p}$) and
the disjunction of $f_{A,p}$ and $\overline{f}_{A,p}$ is obviously equal to the constant function $\top$.
Using Proposition \ref{propos:d-DNNF} it remains to prove that a pair of function $f_1$ and $f_2$ where
$f_1$ is represented by a structured $d$-DNNF in $d$-$DNNF(\mathcal{D}_p,v)$ and
$f_2$ is represented in $d$-$DNNF(\overline{\mathcal{D}}_p,v)$ is mutually disjoint. 
Since $f$ and $\overline{f}$ are mutually disjoint, we know that
$f_{A,p}$ and $\overline{f}_{A,p}$ are mutually disjoint. Therefore, we are done.
\end{proof}
\section{On Structured $d$-DNNFs and SDDs w.r.t.\ a Fixed Vtree}\label{sec3}

In this section, we show for completeness that strong determinism
is a stronger requirement than determinism w.r.t.\ polynomial-size representations.

\begin{proposition}
There exist a vtree $T$ and a Boolean function $f$ such that $f$ and $\overline{f}$ 
can be represented by structured \textup{$d$-DNNFs} w.r.t.\ $T$ in polynomial size but the \textup{SDD} size of $f$ 
w.r.t.\ $T$ is exponential.
\end{proposition}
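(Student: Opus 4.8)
The plan is to exhibit a concrete Boolean function together with a fixed vtree on which the gap appears. The natural candidate comes directly from the hint in the surrounding text: Xue, Choi, and Darwiche \cite{XCD12} showed that swapping the left and right child of a vtree node can blow up SDD size exponentially, while this swap is irrelevant for structured $d$-DNNFs because the left/right distinction plays no role there. So I would take a function $f$ that admits a small structured $d$-DNNF with respect to some vtree $T'$, where $T'$ is obtained from the target vtree $T$ by swapping children at one (or a few) nodes, and where $f$ is known to require exponential-size SDDs with respect to $T$ itself. The first step is therefore to fix such an $f$ and $T$; a hidden-weighted-bit-style or indirect-storage-access-style function, or whatever concrete function underlies the Xue--Choi--Darwiche separation, should serve.

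First I would verify the upper bound, namely that both $f$ and $\overline f$ have polynomial-size structured $d$-DNNFs respecting $T$. The key observation making this easy is that structuredness for DNNFs only constrains $\vars{u_l}\subseteq\vars{v_l}$ and $\vars{u_r}\subseteq\vars{v_r}$ via the inclusion relation, and crucially a structured $d$-DNNF respecting a vtree $T'$ can be reinterpreted as one respecting the swapped vtree $T$ without any size change: at each $\wedge$-gate one simply relabels which input is the ``left'' child, since the DNNF carries no left/right orientation. Hence a small structured $d$-DNNF$_{T'}$ for $f$ immediately yields a small structured $d$-DNNF$_T$ for $f$. Because $f$ and $\overline f$ stand on equal footing here (negate the output by pushing negations to the leaves, which preserves decomposability, determinism, and the vtree), the same bound holds for $\overline f$.

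Next I would establish the lower bound: the SDD size of $f$ with respect to $T$ is exponential. Here I would invoke the Beame--Liew machinery already recalled in the preliminaries, namely that SDD size with respect to a vtree is lower bounded by deterministic best-partition-style communication complexity over the cuts induced by $T$. Proposition \ref{propos:sdds} on shell restrictions is the tool that converts a large collection of mutually inconsistent subfunctions at a vtree node into a size lower bound. Concretely, I would pick a node $v$ in $T$, take shell restrictions $p$ witnessing that the subfunctions $f_{A,p}$ living below $v$ form an exponentially large family of pairwise-distinguishable functions, and conclude from Proposition \ref{propos:sdds} that $SDD(\mathcal{D}_p,v)$ must contain exponentially many distinct roots, forcing $|\mathcal{D}|$ to be exponential.

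I expect the main obstacle to be the lower bound rather than the upper bound: one must choose $f$ and the cut in $T$ so that the orientation-sensitivity is genuinely exploited, i.e.\ so that the partition forced by $T$ (as opposed to $T'$) separates the variables in a way that destroys all succinct strongly deterministic decompositions. Ensuring that the exponential fan-out survives the strong determinism / partition requirement of SDDs, while the mere determinism of structured $d$-DNNFs tolerates it, is the delicate point, and it is exactly what the reference \cite{XCD12} supplies; the remaining work is to package their construction in the shell-restriction language of this paper so that Proposition \ref{propos:sdds} applies cleanly.
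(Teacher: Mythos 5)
Your route is genuinely different from the paper's. You reduce the proposition to the Xue--Choi--Darwiche left/right-swapping separation \cite{XCD12}, exploiting the fact that a structured $d$-DNNF respecting a vtree $T'$ is also a structured $d$-DNNF respecting the child-swapped vtree $T$ (relabel the children of each $\wedge$-gate), so the SDD lower bound w.r.t.\ $T$ comes straight out of \cite{XCD12} as a black box --- note that if you take it as a black box, you do not need any shell-restriction packaging at all, and Proposition \ref{propos:sdds} plays no role. The paper instead gives a self-contained construction: the hidden weighted bit function with a concrete vtree whose root splits the variables $6:4$ into two right-linear subtrees, an exponential SDD lower bound by counting subfunctions under assignments to the left block, and polynomial structured $d$-DNNF upper bounds for $f$ and $\overline{f}$ via an explicit decomposition into mutually disjoint conjunctions $f_{i,j}(X_\ell)\wedge g_{i,j}(X_r)$. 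The paper's choice is deliberate: it wants a separation that holds \emph{even without} switching children, showing that strong determinism itself, not the left/right asymmetry of SDDs, is the obstruction; your reduction would only establish the bare existential statement.

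There is, however, a genuine gap in your treatment of $\overline{f}$. You justify the polynomial-size structured $d$-DNNF for $\overline{f}$ by ``pushing negations to the leaves, which preserves decomposability, determinism, and the vtree.'' This is false. De Morgan dualization exchanges $\wedge$- and $\vee$-gates: the resulting $\wedge$-gates (former $\vee$-gates) may have children sharing variables, destroying decomposability, and the resulting $\vee$-gates (former $\wedge$-gates) have children defined on disjoint variable sets, which are in general simultaneously satisfiable, destroying determinism. Indeed, whether polynomial-size structured $d$-DNNFs are closed under negation is exactly the open question stated in the paper's concluding remarks; if your claim were true, combined with the paper's main theorem it would collapse structured $d$-DNNFs and SDDs and settle the open problem motivating the whole paper. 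The step can be repaired within your own framework: the function of \cite{XCD12} has a polynomial-size SDD w.r.t.\ $T'$, SDDs are closed under negation in polynomial size w.r.t.\ the \emph{same} vtree (as recalled in the introduction), so $\overline{f}$ also has a polynomial-size SDD w.r.t.\ $T'$; both SDDs are in particular structured $d$-DNNFs w.r.t.\ $T'$, hence w.r.t.\ the swapped vtree $T$ by your relabeling argument. With that substitution your proof goes through; as written, the step fails.
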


\begin{proof}
We start with a well-known function from the BDD literature.
The {\it hidden weigthed bit function} HWB$_n$ introduced by Bryant \cite{Bry91}
is defined by 
$$
\textup{HWB}_n(x_1, \ldots, x_n)= x_{\|x\|},
$$
where $\|x\|= x_1 + \cdots + x_n$ is the number of variables set to $1$ in the input $x$ and
the output is $0$ if $x_1 + \cdots + x_n=0$.
W.l.o.g.\ let $n$ be divisible by $10$.
It is well known that the OBDD size of HWB$_n$ is $\Omega(2^{n/5})$ \cite{Bry91}.
Now, let $T$ be a vtree rooted at $v$ on the variables $x_1, \ldots, x_n$. The node $v_\ell$ is the left child of $v$,
and $v_r$ the right one. Furthermore, the subtree rooted at $v_\ell$ 
has $\frac{6}{10} n$ leaves and the one rooted at $v_r$ has $\frac{4}{10} n$ leaves. 
Both subtrees are right-linear which means that for every inner node in the considered vtree the left child is a leaf.
Let $X_\ell$ be the set of $x$-variables on leaves in the subtree rooted at $v_\ell$ and $X_r$ the remaining $x$-variables.

In the following we prove an exponential lower bound on the size of SDDs w.r.t.\ $T$ for HWB$_n$.
From the definition of SDDs it is not difficult to see that
it is sufficient to show an exponential 
lower bound  on the number of subfunctions for HWB$_n$ obtained by replacing 
$\frac{6}{10} n$ $x$-variables by constants. 
(For exponential lower bounds on the size of strongly deterministic structured $d$-DNNFs see also \cite{PD10}.)
Counting subfunctions is well-known from lower bound methods for OBDDs.
Therefore, similar lower bound proofs for the OBDD size of HWB$_n$ can be used.
A slight improvement of Bryant's lower bound is presented in Theorem in 4.10.2 in \cite{Weg00}.
It is proven that the number of subfunctions obtained by replacing 
$\frac{6}{10} n$ variables is at least $2^{\frac{n}{5}-1}$.

Next, we show that the structured $d$-DNNF size for HWB$_n$
w.r.t.\ $T$ is polynomial.
HWB$_n$ can be defined as $\bigvee_{1\leq i \leq n} E_i^n(X)\wedge x_i$,
where $E^n_i$ is the symmetric Boolean function on $n$
variables computing $1$ iff the number of ones in the input, that is the number of variables set to $1$,
is exactly $i$.
Now, we decompose the functions $(E_i^n(X)\wedge x_i)$ in the following way. 
The functions $f_{i,j}$ for $1\leq i\leq n$ and $0\leq i-\frac{4}{10}n \leq j \leq \frac{6}{10} n$ are defined
on the variables in $X_\ell$ and the function value of $f_{i,j}$ is $1$ iff the number of variables in $X_\ell$
set to $1$ is $j$ and $x_i$ is set to $1$ if $x_i\in X_\ell$. 
Furthermore, the functions $g_{i,j}$ for $1\leq i \leq n$ and $0\leq i-j\leq \frac{4}{10}n$ are defined on the variables in $X_r$. 
The function value of $g_{i,j}$ is $1$ iff the number of variables in $X_r$ set to $1$ is $i-j$ and
$x_i$ is set to $1$ if $x_i\in X_r$. 
Let $h_{i,j}(X)=f_{i,j}(X_\ell)\wedge g_{i,j}(X_r)$.
Then, for $i$ fixed $(E_i^n(X)\wedge x_i)$ is the disjunction of all function $h_{i,j}$ for 
$i-\frac{4}{10}n \leq j \leq \frac{6}{10} n$.
It is not difficult to see that each of the $f$- and $g$-functions can be represented by OBDDs 
of at most quadratic size w.r.t.\ every variable ordering because they are slight 
modifications of symmetric functions 
(see, e.g., \cite{Weg00}). Since OBDDs and SDDs w.r.t.\ right-linear vtrees are closely related (see 
\cite{Dar11} and Section 2.3 in \cite{BB19}), 
we can conclude that 
they can also be represented by polynomial-size structured $d$-DNNFs w.r.t.\ the right-linear
subtree rooted at $v_\ell$ and $v_r$, respectively. 
Obviously, HWB$_n$ is the disjunction of all $h_{i,j}$.
Neither the $f$-functions nor the $g$-functions are mutually disjoint but
two different functions $h_{i,j}$ and $h_{i',j'}$ for $i\not= i'$ or $j\not= j'$ are mutually disjoint. 

Since the number of the $f$- and $g$-functions as well as the number of 
conjunctions $f_{i,j}\wedge g_{i,j}$ 
is polynomially bounded in $n$ and the representation size for each $f$- and $g$-function
is polynomial, the $d$-DNNF size for HWB$_n$ w.r.t.\ $T$ is also polynomially bounded.

The polynomial upper bound on the structured $d$-DNNF size for $\overline{\text{HWB}}_n$ w.r.t.\ $T$ can 
be shown in a similar way using the representation

$$\overline{\textup{HWB}}_n(x)=\bigvee\limits_{1\leq k \leq n} (E^n_k(x)\wedge \overline{x}_k) \vee E^n_0(x).$$
Therefore, we are done.
\end{proof}

\section{Simulating Structured $d$-DNNFs by SDDs}\label{sec4}

In this section, we examine the relationship between stuctured $d$-DNNFs and SDDs.
We present a method to represent efficiently a Boolean function $f$ as an SDD 
provided that $f$ and $\overline{f}$ can both be represented by small structured $d$-DNNFs w.r.t.\
the same vtree. In Section \ref{sec3} we have seen that in general this is not 
possible without modifying the given vtree. In order to ensure the partition property of SDDs
we add some auxiliary variables. 
Our simulation generalizes the procedure described by Bollig and Buttkus
\cite{BB19} how to transform two unambiguous nondeterministic OBDDs w.r.t.\ the same 
variable ordering for $f$ and $\overline{f}$ into an SDD for $f$ in polynomial time.

\subsection{\bf Main ideas of the simulation}

Our simulation of structured $d$-DNNFs by SDDs can be divided into three phases.
First, we transform the given structured $d$-DNNF into an equivalent one
which is not only deterministic but strongly deterministic. 
For this reason we modify the given vtree into a vtree with auxiliary variables.
In the second phase we transform the resulting structured $d$-DNNF again 
into an equivalent one to ensure the partition property.
After the first and the second phase there can be edges between $\wedge$-gates 
but SDDs are by definition circuits with alternating $\vee$- and $\wedge$-gates. 
Therefore, in the last phase we transform subcircuits rooted at an $\wedge$-gate
into equivalent subcircuits rooted at an $\vee$-gate.

For the first and second phase we adapt ideas from the transformation of unambiguous 
nondeterministic OBDDs into SDDs \cite{BB19}.
Let $f$ and $\overline{f}$ be the Boolean functions represented by the structured $d$-DNNFs 
$\mathcal{D}$ and $\overline{\mathcal{D}}$ w.r.t.\ a vtree $T$.
Let $\mathcal{D}_u$ denote the subgraph of $\mathcal{D}$ rooted at a node $u$ 
and let $f_u$ be the Boolean function represented by $\mathcal{D}_u$.
Now, let $u$ be an arbitrary $\vee$-node in $\mathcal{D}$ and  
let $f_{u_1}, \dots, f_{u_l}$ be the functions represented at the child nodes of $u$.
Then $f_u$ is equal to
$(f_{u_1} \wedge \top) \vee (f_{u_2} \wedge \top) \vee \dots \vee (f_{u_l} \wedge \top)$.
Since $f_{u_1}, \dots, f_{u_l}$ and $\top$ formally have to be defined on disjoint sets of variables,
the given vtree has to be modified and auxiliary variables are added.
If every $\vee$-node in a given structured $d$-DNNF is handeled in this way, the result is 
an equivalent structured $d$-DNNF w.r.t.\ a modified vtree that is strongly deterministic.
(See Figure \ref{fig:strongdet} for a transformation of an $\vee$-gate.)

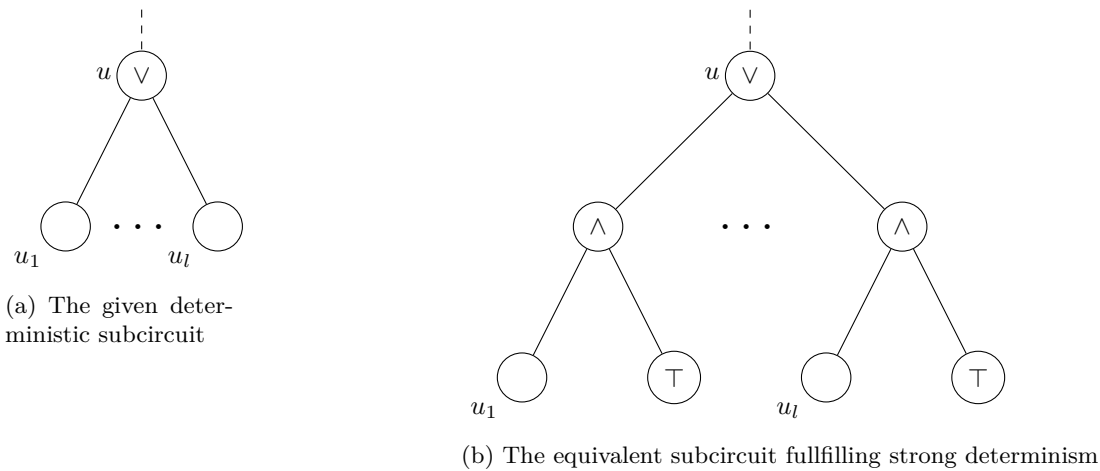
\begin{figure}[ht]
\begin{subfigure}[t]{0.2\textwidth}
	\vspace{0ex}
		\vspace{0ex}
		\begin{tikzpicture}[drawn/.style={circle,draw,minimum size =0.65cm}]
		\node (a) at (0,0) {};
		\node[drawn,label={[shift={(-0.5,-0.5)}]$u$}] (b) at (0,-1) {$\lor$};
		\node[drawn,label={[shift={(-.5,-1)}]$u_1$}] (c) at (-1,-3) {};
		\node[scale=1.75] at (0,-3) {$\dots$};
		\node[drawn,label={[shift={(-.5,-1)}]$u_l$}] (d) at (1,-3) {};
		\draw[dashed] (a) -- (b);
		\draw (b) -- (c);
		\draw (b) -- (d);
		\end{tikzpicture}
\subcaption{The given deterministic subcircuit}
\end{subfigure}\hfill
\begin{subfigure}[t]{0.6\textwidth}
	\vspace{0ex}
	\begin{tikzpicture}[drawn/.style={circle,draw,minimum size =0.65cm}]
	
	\node (start) at (6,2) {};
	\node[drawn,label={[shift={(-0.5,-0.5)}]$u$}] (a) at (6,1) {$\lor$};
	\draw[dashed] (start) -- (a);

	\node[drawn] (b2) at (4,-1) {$\land$};
	\node[drawn,label={[shift={(-.5,-1)}]$u_1$}] (c2) at (3,-3) {};
	\node[drawn] (d2) at (5,-3) {$\top$};
	\draw (a) -- (b2);
	\draw (b2) -- (c2);
	\draw (b2) -- (d2);	

	\node[scale=1.75] at (6,-1) {$\dots$};
	
	\node[drawn] (b3) at (8,-1) {$\land$};
	\node[drawn,label={[shift={(-.5,-1)}]$u_l$}] (c3) at (7,-3) {};
	\node[drawn] (d3) at (9,-3) {$\top$};
	\draw (a) -- (b3);
	\draw (b3) -- (c3);
	\draw (b3) -- (d3);

	\end{tikzpicture}
\subcaption{The equivalent subcircuit fullfilling strong determinism}
\end{subfigure}
	\caption{Transformation of an $\vee$-gate to ensure strong determinism}\label{fig:strongdet}
\end{figure}

The functions $f_{u_1}, \dots, f_{u_l}$ do not necessarily form a partition. Therefore,
the idea in the second phase is to find further functions represented at inner nodes of
$\mathcal{D}$ and $\overline{\mathcal{D}}$ which
together with $f_{u_1}, \dots, f_{u_l}$ yield a partition.
Let $A$ be a subset of the set of variables the function $f$ is defined on and let $p$ be 
an assignment of the variables in $A$.
Moreover, let $f_{A,p}$ denote the subfunction of $f$ obtained by replacing the variables in the subset $A$ by $p$.
The crucial observation for the second phase is that $f_{A,p}$ and $\overline{f}_{A,p}$ form a partition.
Now, let $v$ be the decomposition node of $u$ in the vtree $T$ and let the set $A$ be equal to $shell(v)$.
Furthermore, let $p$ be a shell restriction w.r.t.\ $v$ for which there
exists an assignment $r$ to the remaining variables such that there is a $1$-certificate for the joint assignment $p+r$
that contains the node $u$.
Then the disjunction of all functions represented by the pruned structured $d$-DNNFs in $d$-$DNNF(\mathcal{D}_p,v)$ 
is equal to $f_{A,p}$ and 
the disjunction of all functions represented by the pruned structured $d$-DNNFs 
in $d$-$DNNF(\overline{\mathcal{D}}_p,v)$ is equal to $\overline{f}_{A,p}$.
Moreover, all functions represented by a structured $d$-DNNF in 
$d$-$DNNF(\mathcal{D}_p,v)$ or $d$-$DNNF(\overline{\mathcal{D}}_p,v)$ are mutually disjoint 
(Proposition \ref{propos:partition}).
Now, 
let $u'_1, \ldots, u'_k$ be the set of all 
nodes in $R^{+}(\mathcal{D}_p,v)$ 
without $u_1, \ldots, u_l$ and 
let $w_1, \ldots, w_m$ be the set of all nodes in $R^{+}(\overline{\mathcal{D}}_p,v)$.
Then $f_{A,p}= f_{u_1} \vee \dots \vee f_{u_l} \vee f_{u_1'} \vee \dots \vee f_{u_k'}$
and $\overline{f}_{A,p}= \overline{f}_{w_1} \vee \dots \vee \overline{f}_{w_m}$.
Therefore, we can conclude that
\begin{eqnarray*}
        f_u &=& (f_{u_1} \wedge \top) \vee \dots \vee (f_{u_l} \wedge \top) \vee (f_{u_1'} \wedge \bot) \vee \dots \vee (f_{u_k'} \wedge \bot) \vee\\
        &&(\overline{f}_{w_1} \wedge \bot) \vee \dots \vee (\overline{f}_{w_m} \wedge \bot) 
\end{eqnarray*}
and 
since all the functions are mutually disjoint, they form a partition. 
(Here, we assume that none of these functions is equal to $\bot$.)

After the transformations in the first and second phase 
edges between $\wedge$-gates are possible in the resulting structured $d$-DNNF.
Since SDDs are by definition circuits with alternating $\vee$- and $\wedge$-gates, we 
modify some of the subcircuits rooted at $\wedge$-gates into equivalent ones rooted at $\vee$-gates.
Here, we have to make sure that the partition property is fullfilled.
The procedure is similar to the one in the second phase.
Let $u$ be such an $\wedge$-gate and $u_\ell$ and $u_r$ its child notes. Let $v_\ell$ be the decomposition node
of $u_\ell$ in the given vtree. 
Now, let $p$ be a shell restriction w.r.t.\ $v_\ell$ for which there 
exists an assignment $r$ to the remaining variables such that there is a $1$-certificate for the joint assignment $p+r$
that contains the node $u_\ell$. 
Let $u'_1, \ldots, u'_k$ be the set of all 
nodes in $R^{+}(\mathcal{D}_p,v)$ 
without $u_\ell$ or child notes of $u_\ell$
and let $w_1, \ldots, w_m$ be the 
set of all nodes in $R^{+}(\overline{\mathcal{D}}_p,v)$.
Then we know that 
\begin{eqnarray*}
        f_u &=& (f_{u_\ell} \wedge f_{u_r}) \vee 
         (f_{u_1'} \wedge \bot) \vee \dots \vee (f_{u_k'} \wedge \bot) \vee
(\overline{f}_{w_1} \wedge \bot) \vee \dots \vee (\overline{f}_{w_m} \wedge \bot) 
\end{eqnarray*}
and $f_{u_\ell}, f_{u_1'}, \ldots, f_{u'_k}, f_{w_1}, \ldots, f_{w_m}$ form a partition.
(See Figure \ref{fig:and} for a transformation of a function represented at an $\wedge$-gate.
For simplicity there are no nodes $u'_1, \ldots, u'_k$ in the figure.) 
Note that for the third phase we do not have to alter the already modified vtree.

\begin{figure}[ht]
	\begin{subfigure}[t]{0.2\textwidth}
		\vspace{0ex}
		\begin{tikzpicture}[drawn/.style={circle,draw,minimum size =0.65cm}]
			\node (a) at (0,0) {};
			\node[drawn,label={[shift={(-0.5,-0.5)}]$u$}] (b) at (0,-1) {$\land$};
			\node[drawn,label={[shift={(-.5,-1)}]$u_\ell$}] (c) at (-1,-3) {};
			\node[drawn,label={[shift={(-.5,-1)}]$u_r$}] (d) at (1,-3) {};
			\draw[dashed] (a) -- (b);
			\draw (b) -- (c);
			\draw (b) -- (d);
		\end{tikzpicture}
		\vfill
		\subcaption{The given subcircuit rooted at an $\wedge$-gate}
	\end{subfigure}
	\hfill
	\begin{subfigure}[t]{0.8\textwidth}
		\vspace{0ex}
		\begin{tikzpicture}[drawn/.style={circle,draw,minimum size =0.65cm}]
		
		\node (start) at (4,2) {};
		\node[drawn] (a) at (4,1) {$\lor$};
		\node[drawn] (b1) at (0,-1) {$\land$};
		\node[drawn,label={[shift={(-.5,-1)}]$u_\ell$}] (c1) at (-1,-3) {};
		\node[drawn,label={[shift={(-.5,-1)}]$u_r$}] (d1) at (1,-3) {};
		\draw[dashed] (start) -- (a);
		\draw (a) -- (b1);
                \draw (b1) -- (c1);
		\draw (b1) -- (d1);

		\node[drawn] (b2) at (3,-1) {$\land$};
		\node[drawn,label={[shift={(-.5,-1)}]$w_1$}] (c2) at (2,-3) {};
		\node[drawn] (d2) at (4,-3) {$\bot$};
		\draw (a) -- (b2);
		\draw (b2) -- (c2);
		\draw (b2) -- (d2);
		
		\node[scale=1.75] at (5,-1) {$\dots$};

		\node[drawn] (b3) at (7,-1) {$\land$};
		\node[drawn,label={[shift={(-.5,-1)}]$w_m$}] (c3) at (6,-3) {};
		\node[drawn] (d3) at (8,-3) {$\bot$};
		\draw (a) -- (b3);
		\draw (b3) -- (c3);
		\draw (b3) -- (d3);

		\end{tikzpicture}

		\subcaption{The equivalent subcircuit rooted at an $\vee$-gate}
	\end{subfigure}
	\caption{Transformation of a subcircuit rooted at an $\wedge$-gate}\label{fig:and}
\end{figure}
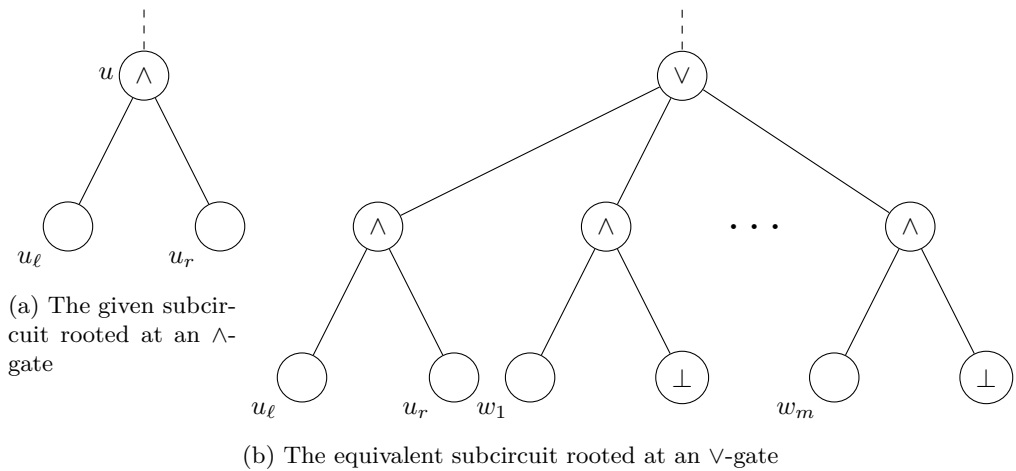

\subsection{\bf The Simulation}\label{subsec:simulation}

We start with a kind of normalization for vtrees. For our simulation later on it is not necessary that
the vtrees for the given structured $d$-DNNFs representing $f$ and $\overline{f}$ are the same 
but that they have the same normalized vtree.
Let $T$ be a given vtree. 
Since there is no difference between a left and right child 
of an inner node in a structured $d$-DNNF, we 
normalize the vtree in the sense 
that for each inner node 
$v$ the number of variables $\vars{v_\ell}$ of the left child of $v$ is at most as large as the number of variables
$\vars{v_r}$ of the right child of $v$. 
(See Figure \ref{fig:vtree} for an example of such a vtree.) The modified vtree $T'$ 
for our simulation has the following structure. 
For each inner node $v$ in $T$ for which $\vars{v}$ contains more than two variables
we add another node $v'$. If $v$ is the left (right) child of the node parent$(v)$, 
the parent of $v$ in $T$, $v'$ is the left (right) child of parent$(v)$ in $T'$.
Furthermore, $v'$ is the new parent of $v$ and $v$ is the left child of $v'$ in $T'$.
The right child of $v'$ is a leaf labeled by a new auxiliary variable. (See Figure \ref{fig:modvtree} for the 
modified vtree $T'$ w.r.t.\ the vtree $T$ in Figure \ref{fig:vtree}, 
a node $v$ in $T$ and the nodes $v$ and $v'$ in $T'$.) 
In the following $v$ denotes a node in $T$ as well as in $T'$. It will be clear from the context whether $v$
is in $T$ or $T'$.

	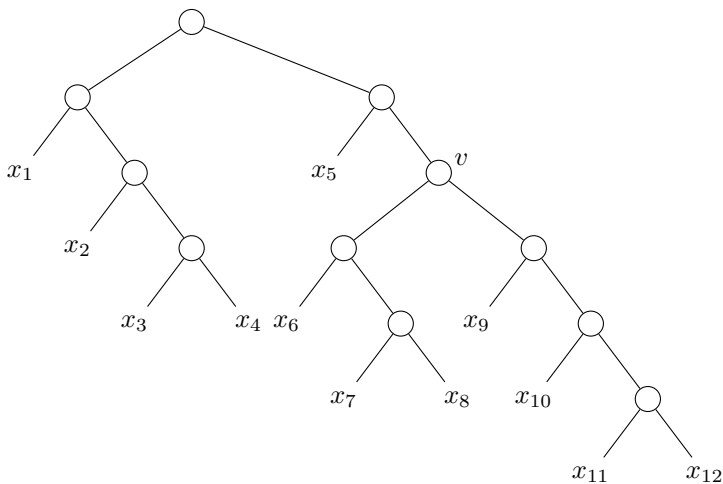
\begin{figure}[!ht]
 \begin{tikzpicture}[drawn/.style={circle,draw},level 1/.style={sibling
 	distance=30mm, level distance = 10mm},level 2/.style={sibling distance=15mm}]
 \node[drawn] {}
 child { node[drawn] {}
 	child { node {$x_1$}}
 	child { node[drawn] {}
 		child { node {$x_2$} }
 		child { node[drawn] {} 
 			child { node {$x_3$} }
 			child { node {$x_4$} }
 		}
 	}
 }
 child { node[drawn] at (1,0){}
 	child { node {$x_5$} }
 	child { node[drawn, label={[shift={(0.3,-0.2)}]$v$}] {}
 		child { node[drawn] at (-0.5,0){} 
 			child { node {$x_6$} }
 			child { node[drawn] {} 
 				child { node {$x_7$} }
 				child { node {$x_8$} }        
 			}        
 		}
 		child { node[drawn] at (0.5,0){} 
 			child { node {$x_9$} }
 			child { node[drawn] {}  
 				child { node {$x_{10}$} }
 				child { node[drawn] {} 
 					child { node {$x_{11}$} }
 					child { node {$x_{12}$} }        
 				}
 			}
 		}
 	}                
 };        
 \end{tikzpicture}
	\caption{A vtree $T$ on the Boolean variables $x_1, x_2, \ldots, x_{12}$}\label{fig:vtree}
\end{figure}

	\begin{figure}[!ht]
		\begin{tikzpicture}[drawn/.style={circle,draw},level 1/.style={sibling distance=25mm, level distance = 15mm},level 3/.style={sibling distance=20mm, level distance = 10mm}]
		\node[drawn] {}
		child { node[drawn] {}
			child { node[drawn] at (-1.5,-1.5){}
				child { node[drawn]{} 
					child { node {$x_1$} }
					child { node[drawn] {}
						child { node[drawn] {} 
							child { node {$x_2$} }
							child { node[drawn] at (0.4,-0.4){}
								child { node {$x_3$} }
								child { node {$x_4$} }	
							}
						}
						child { node {$h_1$} edge from parent[very thick]}
					}
				}
				child { node {$h_2$} edge from parent[very thick]}	
			}
			child { node[drawn] at (1.5,-1.5){}
				child { node[drawn] {} 
					child { node {$x_5$} }
					child { node[drawn, label={[shift={(0.4,-0.3)}]$v'$}] {}
						child { node[drawn, label={[shift={(0.4,-0.3)}]$v$}] {}
							child { node[drawn] at (-0.4,-0.4) {}
								child { node[drawn] {} 
									child { node {$x_6$} }
									child { node[drawn]  {}
										child { node {$x_7$} }
										child { node {$x_8$} }	
									}
								}
								child { node {$h_3$} edge from parent[very thick]}	
							}
							child { node[drawn] at (0.4,-0.4) {}
								child { node[drawn]  {}
									child { node {$x_9$} }
									child { node[drawn]  {}
										child { node[drawn]  {}
											child { node {$x_{10}$} }
											child { node[drawn]  {}
												child { node {$x_{11}$} }
												child { node {$x_{12}$} }	
											}	
										}
										child { node {$h_4$} edge from parent[very thick] }	
									}	
								}
								child { node  {$h_5$} edge from parent[very thick] }	
							}	
						}
						child { node {$h_6$} edge from parent[very thick]}	
					}
				}
				child { node {$h_7$} edge from parent[very thick]}
			}
		}
		child { node {$h_8$} edge from parent[very thick]};	
		\end{tikzpicture}
	
	\caption{The modified vtree w.r.t.\ the vtree $T$ in Figure \ref{fig:vtree}}\label{fig:modvtree}
	\end{figure}

For technical reasons we assume that the given structured $d$-DNNFs are of the following form.

\begin{definition}
Let $\mathcal{D}$ be a structured $d$-DNNF. We call $\mathcal{D}$ \em{simple} if
there exist no edges between $\vee$-gates, no $\vee$-gate is connected to a node labeled by $\bot$ or $\top$,
and nodes labeled by $\bot$ are the only ones in $\mathcal{D}$ that represent the constant function $\bot$.
\end{definition}

The size of a structured $d$-DNNF and the size of an equivalent simple one are polynomially related.
Moreover, 
it is easy to check whether a function represented by a $d$-DNNF
is equal to the constant function $\bot$ in polynomial time. 

In the followig simulation we look at DNNFs as graphs
and we start with a simple fact.

\begin{fact}\label{fact:twoVar}
Each Boolean function on two variables $x$ and $y$ can be represented by an \textup{SDD} w.r.t.\ a vtree on $x$ and $y$
with at most seven nodes.
\end{fact}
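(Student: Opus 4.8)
The plan is to give the SDD directly, via a Shannon decomposition along the root of the vtree. A vtree $T$ on $\{x,y\}$ has a root $v$ with two leaf children; after the normalization of Section~\ref{subsec:simulation} we may name the variables so that the left child $v_L$ is the leaf labeled $x$ and the right child $v_R$ is the leaf labeled $y$. First I would dispose of the degenerate cases: if $f$ is constant, it is the single node $\bot$ or $\top$; if $f$ is one of the four literals $x,\overline{x},y,\overline{y}$, it is a single literal node (each of $x,y$ lies in the variable set of $T$). In all of these cases the resulting SDD has one node, well within the bound.

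For the general case I would expand $f$ by Shannon decomposition with respect to the left variable $x$,
\[
 f \;=\; (x \wedge f_{x=1}) \vee (\overline{x} \wedge f_{x=0}),
\]
and realize this directly as an SDD respecting $T$. The output gate is a disjunction feeding from two $\wedge$-gates $g_1,g_2$, where $g_1$ has prime $x$ and sub $f_{x=1}$ and $g_2$ has prime $\overline{x}$ and sub $f_{x=0}$. The two primes $x,\overline{x}$ are SDDs respecting $T_{v_L}$ (the single leaf $x$), while each of $f_{x=1},f_{x=0}$ is a function of $y$ alone, hence one of $\bot,\top,y,\overline{y}$, and therefore an SDD respecting $T_{v_R}$ consisting of a single node.

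It then remains only to check the conditions of Definition~\ref{def:sdds} and to count nodes. For the partition requirement I would note that the set of primes $\{x,\overline{x}\}$ is disjoint, neither prime is the constant $\bot$, and $x \vee \overline{x} = \top$, so $\{x,\overline{x}\}$ is a partition as demanded. Counting nodes gives one $\vee$-gate, the two $\wedge$-gates $g_1,g_2$, the two prime leaves $x,\overline{x}$, and the two sub leaves, i.e.\ at most $1+2+2+2 = 7$ nodes. Since every branch of the argument yields an SDD respecting $T$ with at most seven nodes, the fact follows. There is essentially no obstacle here; the only point needing a moment's care is to observe that the partition requirement constrains the primes $x,\overline{x}$ only, so it does no harm if one of the subs $f_{x=1},f_{x=0}$ happens to equal the constant $\bot$.
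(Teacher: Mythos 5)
Your proposal is correct and matches the paper's (very terse) proof in spirit: the paper simply asserts that such an SDD ``can easily be constructed by evaluating each assignment to the variables,'' which is exactly your Shannon decomposition $f = (x \wedge f_{x=1}) \vee (\overline{x} \wedge f_{x=0})$ with the residual subfunctions obtained by evaluation. Your version usefully makes explicit what the paper leaves implicit --- that $\{x,\overline{x}\}$ satisfies the partition requirement of Definition~\ref{def:sdds}, that a sub equal to $\bot$ is harmless, and that the node count is $1+2+2+2=7$.
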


Such an SDD for a Boolean function can easily be constructed by evaluating each assignment to the variables.

\begin{simulation}\label{simulation:sdd}
Let $f$ be a Boolean function on $n$ variables and $f$ and $\overline{f}$ are represented by 
structured \textup{$d$-DNNFs} $\mathcal{D}$ and $\overline{\mathcal{D}}$ w.r.t.\ the same normalized
vtree $T$ on the set of variables $X$.
Because of Fact \ref{fact:twoVar} we assume that each function represented at 
a node whose decomposition node in $T$ is in a subtree $T_2$ on two variables is already 
represented by an \textup{SDD}
w.r.t.\ $T_2$.
Now, we construct an \textup{SDD} $S$ representing $f$ in the following way. 
First, we modify the vtree $T$ to a vtree $T'$ on the variables in $X$ and at most $n-2$ 
auxiliary variables as mentioned above.

Let $(V,E)$ and $(\overline{V},\overline{E})$ be the sets of nodes and edges of the
structured \textup{$d$-DNNFs} $\mathcal{D}$ and $\overline{\mathcal{D}}$, respectively.
Let $Y = V \cup \overline{V}$ and $Z = \{\emptyset, \wedge, \top, \bot\} \cup Y$.
The nodes of $S$ are tuples $(u,v) \in Y \times Z$.

We construct $S$ respecting the modified vtree by mapping nodes and edges of $\mathcal{D}$ and $\overline{\mathcal{D}}$
to nodes and edges of $S$ in the following way.
\begin{enumerate}
                \item For each node $u$ in $Y$ we add the node $(u,\emptyset)$ to $S$.
                      If the decomposition node of $u$ in $T$ is in a subtree of $T$ on two variables,
                      $(u,\emptyset)$ has the same label as $u$. 
                      If $(u,u')\in E$ or $(u,u')\in \overline{E}$, respectively, 
                      the edge $((u, \emptyset), (u', \emptyset))$ is inserted. 
                      Otherwise, if the decomposition node of $u$ in $T$ is not in a subtree of 
                      $T$ on two variables, $(u,\emptyset)$ is an $\vee$-node.

                \item For each $\vee$-node $u \in V$ whose decomposition node $v$ in $T$ is not in a subtree of $T$
                      on two variables the following nodes and edges are added.
                      Let $p$ be a shell restriction w.r.t.\ $v$ such that $u\in R(\mathcal{D}_p,v)$.
                      For each node $u' \in R^+(\mathcal{D}_p,v) \cup R^+(\overline{\mathcal{D}}_p,v)$, 
                      we add an $\wedge$-node $(u,u')$ to $S$.
                      Moreover, we add the nodes $(u,\bot)$ and $(u,\top)$ to $S$ 
                      which are labeled by $\bot$ and $\top$,
                      respectively. Furthermore, the edges 
                      $((u, \emptyset), (u, u'))$ and $((u, u'), (u', \emptyset))$
                      are added to $S$. 
                      In addition, if $(u,u')\in E$, the edge $((u, u'), (u, \top))$ is inserted, 
                      otherwise $((u, u'), (u, \bot))$.

                     In a similar way each $\vee$-node in $\overline{V}$ is handled.
              \item  For each $\wedge$-node $u\in V$ whose decomposition node $v$ in $T$ is not in a subtree of $T$
                     on two variables the following nodes and edges are added. 
                     Let $u_\ell$ and $u_r$ be the left and the right child of $u$,
                     respectively. Moreover, let $v_\ell$ be the decomposition node of $u_\ell$ in $T$ and 
                     let $p$ be a shell restriction w.r.t.\ $v_\ell$ 
                     such that $u_\ell\in R(\mathcal{D}_p,v_\ell)$.
                     We add the $\wedge$-node $(u,\wedge)$ and a node $(u,\bot)$ labeled $\bot$ and 
                     the edges $((u,\emptyset), (u, \wedge))$, 
                     $((u,\wedge),(u_\ell, \emptyset))$, $((u,\wedge),(u_r, \emptyset))$. 
                     Furthermore, for each node 
                     $u' \in R^+(\mathcal{D}_p,v_\ell) \cup R^+(\overline{\mathcal{D}}_p,v_\ell)$ without $u_\ell$ and 
                     any child node of $u_\ell$ we add the $\wedge$-node $(u,u')$ and the edges $((u,\emptyset),(u,u'))$,
                     $((u,u'), (u',\emptyset))$, $((u,u'),(u,\bot))$.

                      In a similar way each $\wedge$-node in $\overline{V}$ is handled.
\end{enumerate}

Let $\textnormal{root}(\mathcal{D})$ be the root of the structured \textup{$d$-DNNF} $\mathcal{D}$.
The root of $S$ is given by $(\textnormal{root}(\mathcal{D}), \emptyset)$. 
All nodes $(u,\top)$ and all nodes $(u,\bot)$ for $u\in Y$ can be merged to a node labeled $\top$ and  $\bot$, respectively. 
Finally, we remove all nodes and edges from the 
resulting \textnormal{SDD} $S$ which cannot be reached from 
$(\textnormal{root}(\mathcal{D}), \emptyset)$. 

\end{simulation}

	
\subsection{\bf Size, correctness, and equivalence}

In the following we show that the simulation presented in Subsection \ref{subsec:simulation}
can be done in polynomial-size. Furthermore, we prove that the result of the simulation
is an SDD for $f$, the Boolean function represented by one of the given structured $d$-DNNFs.
Our proofs generalize the proofs presented by Bollig and Buttkus \cite{BB19}.

We get a relationship between the sizes of the given structured $d$-DNNFs and the constructed equivalent SDD
by the following lemma which states that the increase in size is at most quadratic.

\begin{lemma}\label{lemma:size}
        Let $\mathcal{D}$ and $\overline{\mathcal{D}}$ be structured \textup{$d$-DNNFs}
        w.r.t.\ a normalized vtree $T$ that represent the Boolean functions $f$ and $\overline{f}$,
        respectively.
        Then, the result $S$ from Simulation \ref{simulation:sdd}
        has $\mathcal{O}((|\mathcal{D}|+|\mathcal{\overline{D}}|)^2)$ nodes and edges.
\end{lemma}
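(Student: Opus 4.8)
The plan is to reduce both counts---nodes and edges of $S$---to the single quantity $|Y| = |V \cup \overline{V}|$, and then observe that $|Y| = O(|\mathcal{D}| + |\overline{\mathcal{D}}|)$. For the node count I would simply invoke the fact that, by construction, \emph{every} node produced by Simulation~\ref{simulation:sdd} is a tuple $(u,w) \in Y \times Z$ with $Z = \{\emptyset, \wedge, \top, \bot\} \cup Y$. Hence the number of distinct nodes is at most $|Y|\cdot|Z| = |Y|\,(|Y|+4) = O(|Y|^2)$, and this already gives the desired bound on the number of nodes once $|Y| = O(|\mathcal{D}|+|\overline{\mathcal{D}}|)$ is established.

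For the edges I would exploit that $S$ is an SDD, so its gates strictly alternate between $\vee$ and $\wedge$; consequently every edge of $S$ is incident to exactly one $\wedge$-gate, and it suffices to sum the degrees over all $\wedge$-gates. The $\wedge$-gates fall into two kinds. The tuple $\wedge$-gates, namely those $(u,w)$ with $w \in Y \cup \{\wedge\}$ created in Steps~2 and~3, each have out-degree exactly $2$ (fan-in $2$ of $\wedge$-gates) and in-degree exactly $1$, their unique incoming edge coming from $(u,\emptyset)$; there are $O(|Y|^2)$ of them, contributing $O(|Y|^2)$ edges in total. The remaining $\wedge$-gates are those of the form $(u,\emptyset)$ lying inside the copied two-variable SDDs of Step~1; by Fact~\ref{fact:twoVar} each such SDD has at most seven nodes, so these gates have bounded degree and, being at most $|Y|$ in number, contribute only $O(|Y|)$ further edges. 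Altogether the edge count is $O(|Y|^2)$.

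The step I expect to need the most care is the in-degree claim for the tuple $\wedge$-gates, since it is what keeps a potentially high-degree $\vee$-node $(u',\emptyset)$ from inflating the total: I would verify against all three construction steps that a gate $(u,w)$ with $w \in Y \cup \{\wedge\}$ is created exactly once and receives its single incoming edge only from $(u,\emptyset)$. This hinges on the fact that each $u \in Y$ is handled in precisely one step---it is either a $\vee$-gate or an $\wedge$-gate of $\mathcal{D}$ or $\overline{\mathcal{D}}$, but not both---so no tuple $(u,w)$ is produced twice. Finally, the minor point $|Y| = O(|\mathcal{D}|+|\overline{\mathcal{D}}|)$ I would dispatch by noting that each $d$-DNNF has only $O(n)$ leaves and $n = O(|\mathcal{D}|)$, so the number of graph nodes is linearly related to the number of gates. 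Combining the two bounds yields $O(|Y|^2) = O((|\mathcal{D}|+|\overline{\mathcal{D}}|)^2)$ nodes and edges, as claimed.
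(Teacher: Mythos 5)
Your proposal is correct and takes essentially the same route as the paper: the node bound comes from observing that every node of $S$ is a tuple in $Y \times Z$, giving $O(|Y|^2) = O((|\mathcal{D}|+|\overline{\mathcal{D}}|)^2)$, and the edge count is then shown to be of the same order. Your degree-counting argument over the $\wedge$-gates (alternation, in-degree $1$ and out-degree $2$ for the tuple $\wedge$-gates, bounded contribution of the copied two-variable parts) is simply a more explicit justification of what the paper asserts in one line, namely that the number of edges is linear in the number of nodes of $S$.
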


\begin{proof}
Since we assume in the paper that the given structured \textup{$d$-DNNFs} are smooth, we know that the size
of the vtree $T$ is not larger than the sizes of $\mathcal{D}$ or $\overline{\mathcal{D}}$. The modified vtree 
$T'$ is at most twice the size of $T$. Therefore, the number of nodes in $S$ is asymptotically
dominated by the size of $|Y\times Z|$ which is $\mathcal{O}((|\mathcal{D}|+|\mathcal{\overline{D}}|)^2)$.
The number of edges is linear in the number of nodes in $S$. Therefore, we are done.
\end{proof}

Each node $u$ in $V\cup \overline{V}$ is mapped to a node $(u,\emptyset)$ of $S$ in Simulation \ref{simulation:sdd}.
In order to prove that $S$ is a syntactically correct SDD that represents
the same function as $\mathcal{D}$, we prove that each subcircuit $S_{(u,\emptyset)}$ represents
the same function as $\mathcal{D}_u$ or $\overline{\mathcal{D}}_u$, respectively.
For this reason we map each node $u$ to a node in the modified vtree $T'$ which is 
the decomposition node of $(u,\emptyset)$. Therefore, 
$S_{(u,\emptyset)}$ respect the subtree of $T'$ rooted at this node.

Remember that for each node $v$ in the given vtree $T$ there is a node with the same name in $T'$.
Furthermore, if $\vars{v}$ is larger than two in $T$, the parent of $v$ in $T'$ is denoted by $v'$.

\begin{definition}
Let $T'$ be the modified vtree in Simulation \ref{simulation:sdd} and $u \in V \cup \overline{V}$ 
be a node of the given structured $d$-DNNFs.
The function \emph{node} maps nodes of $\mathcal{D}$ and $\overline{\mathcal{D}}$ 
to nodes of $T'$ in the following way.
\begin{eqnarray*}
  \textnormal{node}(u) :=
   \begin{cases}
    v, & \hspace{-0.25cm} v \textnormal{ is } \dnode{u} \textnormal{ in }T \textnormal{ and } \\
       & \hspace{-0.05cm}  u \textnormal{ is an } \wedge\textnormal{-node or }
               \vars{v} \textnormal{ is at most two in }T\\
   v', & \hspace{-0.25cm} v \textnormal{ is } \dnode{u} \textnormal{ in }T, \vars{v} 
                             \textnormal{ is larger than two,  and } \\
       & \hspace{-0.05cm}  u \textnormal{ is an } \vee\textnormal{-node}\\
  \end{cases}
\end{eqnarray*}
\end{definition}

\begin{lemma}\label{lemma:mainlemma}
Let $\mathcal{D}$ and $\overline{\mathcal{D}}$ be structured \textup{$d$-DNNFs}
w.r.t.\ a normalized vtree $T$ that represent the Boolean functions $f$ and $\overline{f}$. Furthermore,
let $S$ be the result from Simulation \ref{simulation:sdd} and $T'$
the modified vtree.
Then, each node $(u, \emptyset)$ of $S$ is the root of a syntactically correct \textnormal{SDD}
$S_{(u,\emptyset)}$ w.r.t.\ the vtree $T_{\textnormal{node}(u)}$.
Moreover, $S_{(u,\emptyset)}$ represents the same Boolean function as $\mathcal{D}_u$ for $u\in V$
or $\overline{\mathcal{D}}_u$ for $u\in \overline{V}$, respectively.
\end{lemma}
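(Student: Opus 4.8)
The plan is to prove this by structural induction on the node $u$, following the inductive definition of SDDs and the recursive structure of the given structured $d$-DNNFs. The induction hypothesis is exactly the statement of the lemma: for every child node the corresponding subcircuit $S_{(\cdot,\emptyset)}$ is a syntactically correct SDD respecting the appropriate subtree of $T'$ and representing the correct function. The base case consists of nodes whose decomposition node lies in a subtree $T_2$ on two variables; by the preliminary assumption (justified by Fact~\ref{fact:twoVar}) these are already replaced by SDDs respecting $T_2$, and the $\textnormal{node}$ map sends them to their decomposition node in $T$, so both the syntactic and semantic claims hold immediately.

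For the inductive step I would split into the three cases matching Simulation~\ref{simulation:sdd}, namely $u$ a $\vee$-node, $u$ an $\wedge$-node, and the literal/constant nodes handled in step~1. For a $\vee$-node $u$ with children $u_1,\dots,u_l$, the construction makes $(u,\emptyset)$ an $\vee$-gate feeding into $\wedge$-gates $(u,u')$; its ``prime'' children are the $(u',\emptyset)$ for $u'$ ranging over $R^+(\mathcal{D}_p,v)\cup R^+(\overline{\mathcal{D}}_p,v)$, while the ``sub'' children are the constant nodes $(u,\top)$ and $(u,\bot)$. The semantic identity to verify is the decomposition
\begin{equation*}
f_u = \bigvee_{i=1}^{l}(f_{u_i}\wedge\top)\vee\bigvee(f_{u'}\wedge\bot)\vee\bigvee(\overline{f}_{w}\wedge\bot),
\end{equation*}
which holds because the disjunction of the $f_{u'}$ over $R^+(\mathcal{D}_p,v)$ equals $f_{A,p}$, the disjunction over $R^+(\overline{\mathcal{D}}_p,v)$ equals $\overline{f}_{A,p}$, and all terms paired with $\bot$ vanish. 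The crucial syntactic obligation is the partition property: the prime functions must be pairwise disjoint and cover $\top$. Disjointness and the partition property follow directly from Proposition~\ref{propos:partition}, using that $v=\dnode{u}$ and $A=shell(v)$, together with the simplicity assumption ruling out any $\bot$-representing primes. I must also check that each prime child respects the subtree of $T'$ rooted at $v$ (so that the new node respects $v'$ with an auxiliary variable as its right child), which is where the $\textnormal{node}$ map and the modification $v\mapsto v'$ for $\vee$-nodes do their work.

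For an $\wedge$-node $u$ with children $u_\ell,u_r$, the construction builds an $\vee$-gate $(u,\emptyset)$ whose first prime-sub pair is the genuine conjunction $(u,\wedge)$ routing to $(u_\ell,\emptyset)$ and $(u_r,\emptyset)$, and whose remaining pairs attach the spurious primes $u'\in R^+(\mathcal{D}_p,v_\ell)\cup R^+(\overline{\mathcal{D}}_p,v_\ell)$ to $\bot$. Here the target function is $f_u=f_{u_\ell}\wedge f_{u_r}$, and the displayed decomposition shows this equals the stated disjunction because every $\bot$-paired term drops out. Again the partition property of $f_{u_\ell},f_{u'_1},\dots,f_{w_m}$ comes from Proposition~\ref{propos:partition} applied at $v_\ell=\dnode{u_\ell}$, and I should note the prime $f_{u_\ell}$ already lies in $R^+(\mathcal{D}_p,v_\ell)$ so removing it from the spurious list avoids duplication. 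The main obstacle I anticipate is the bookkeeping around decomposition nodes: I must verify that all the prime children genuinely respect the subtree of $T'$ rooted at $v_\ell$ (or its auxiliary-augmented version), that the sub child $(u_r,\emptyset)$ respects the subtree rooted at the right child of $v_\ell$'s parent in $T$, and that the interplay between $\textnormal{node}(u)$ and the vtree-modification $v\mapsto v'$ makes every newly created $\wedge$-gate of $S$ respect a genuine inner node of $T'$. Establishing this vtree-respecting property rigorously—so that $S_{(u,\emptyset)}$ respects exactly $T'_{\textnormal{node}(u)}$—is the technically delicate part, whereas the functional-equivalence identities are essentially immediate from the subfunction interpretation of $R^+$-sets and Proposition~\ref{propos:partition}.
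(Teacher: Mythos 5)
Your plan has the same skeleton as the paper's proof: induction over the constructed circuit, base case for two-variable subtrees via Fact~\ref{fact:twoVar}, a case split into the $\vee$-node and $\wedge$-node steps of Simulation~\ref{simulation:sdd}, the partition property from Proposition~\ref{propos:partition} together with simplicity, and the same cancellation computation for functional equivalence. But two steps, as you state them, would fail. First, the induction scheme: you induct ``on the node $u$, following \dots the recursive structure of the given structured $d$-DNNFs,'' with the hypothesis invoked ``for every child node.'' The prime children $(u',\emptyset)$ of $S_{(u,\emptyset)}$ range over $R^+(\mathcal{D}_p,v)\cup R^+(\overline{\mathcal{D}}_p,v)$, i.e., over nodes of \emph{both} circuits that are in general not descendants of $u$ in $\mathcal{D}$ (the $w$'s come from $\overline{\mathcal{D}}$, and even the $u'$'s from $\mathcal{D}$ need not lie below $u$). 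So a structural induction over the given $d$-DNNFs does not supply the hypothesis you need at these nodes. The paper instead inducts on the depth of the subgraph $S_{(u,\emptyset)}$ inside $S$: every $(u',\emptyset)$ lies at distance two below $(u,\emptyset)$ in $S$, hence has strictly smaller depth, and the hypothesis applies to it regardless of its position in $\mathcal{D}$ or $\overline{\mathcal{D}}$. Your argument should be recast this way (or with an equivalent well-founded measure on $S$).

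Second, in the $\wedge$-node case your assertion that ``the prime $f_{u_\ell}$ already lies in $R^+(\mathcal{D}_p,v_\ell)$'' is false whenever $u_\ell$ is an $\vee$-gate: then $u_\ell$ has children whose decomposition node is also $v_\ell$, so $u_\ell\in R(\mathcal{D}_p,v_\ell)$ but $u_\ell\notin R^+(\mathcal{D}_p,v_\ell)$; it is the \emph{children} of $u_\ell$ that belong to $R^+(\mathcal{D}_p,v_\ell)$. This is exactly why Simulation~\ref{simulation:sdd} excludes ``$u_\ell$ and any child node of $u_\ell$.'' In this subcase the partition property of $\{f_{u_\ell}\}\cup\{f_{u'}\}\cup\{\overline{f}_{w}\}$ does not follow ``directly from Proposition~\ref{propos:partition} applied at $v_\ell$'': you additionally need Proposition~\ref{propos:d-DNNF} (mutual disjointness of the functions at nodes with decomposition node $v_\ell$) and the observation that $f_{u_\ell}$ equals the disjunction of the functions at its children, so that merging those pairwise disjoint blocks of the partition into the single block $f_{u_\ell}$ again yields a partition. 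The paper's proof devotes a separate paragraph to this $\vee$-gate subcase; as written, your argument covers only the situation where $u_\ell$ is an $\wedge$-gate or a leaf.
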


\begin{proofidea}
We give a proof by induction on the depth $l$ of the subgraph $S_{(u, \emptyset)}$ of the SDD $S$ in the appendix.
Here, the depth is the longest path to a leaf.
\end{proofidea}

As a result from Lemma \ref{lemma:mainlemma}, we know that $S$ is a syntactically correct SDD that represents 
the same function as the given structured $d$-DNNF $\mathcal{D}$.

\begin{corollary}\label{corollary:syntactic}
        Let $\mathcal{D}$ and $\overline{\mathcal{D}}$ be structured \textup{$d$-DNNFs}
        w.r.t.\ a normalized vtree $T$ that represent the Boolean functions $f$ and $\overline{f}$. Furthermore,
        let $S$ be the result from Simulation \ref{simulation:sdd}.
        Then, $S$ is a syntactically correct $\textnormal{SDD}$ w.r.t.\ the modified vtree $T'$ used in the simulation
        and $S$ represents $f$.
\end{corollary}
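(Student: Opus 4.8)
The plan is to obtain the corollary as a direct specialization of Lemma~\ref{lemma:mainlemma} to the output gate of $\mathcal{D}$. By the construction in Simulation~\ref{simulation:sdd} the root of $S$ is the node $(\textnormal{root}(\mathcal{D}), \emptyset)$, so I would simply instantiate Lemma~\ref{lemma:mainlemma} with $u = \textnormal{root}(\mathcal{D}) \in V$. The lemma then yields at once that $(\textnormal{root}(\mathcal{D}), \emptyset)$ is the root of a syntactically correct SDD $S_{(\textnormal{root}(\mathcal{D}), \emptyset)}$ and that this SDD represents the same Boolean function as $\mathcal{D}_{\textnormal{root}(\mathcal{D})} = \mathcal{D}$, which is $f$. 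Since $S_{(\textnormal{root}(\mathcal{D}), \emptyset)}$ is precisely the reachable part of $S$, this already gives both the syntactic correctness of $S$ and the fact that $S$ represents $f$.

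The one point that needs verification is that $S$ respects the \emph{full} modified vtree $T'$ and not merely a proper subtree of it. For this I would argue that $\dnode{\textnormal{root}(\mathcal{D})}$ in $T$ is the root $r$ of $T$: because $\mathcal{D}$ is smooth and represents $f$ on the whole variable set, we have $\vars{\textnormal{root}(\mathcal{D})} = X$, and $r$ is the unique node of $T$ with $\vars{r} = X$. For $n \geq 3$ the set $\vars{r}$ contains more than two variables, and since the output gate of the (simple) structured $d$-DNNF we consider is an $\vee$-gate, the definition of the function $\textnormal{node}$ gives $\textnormal{node}(\textnormal{root}(\mathcal{D})) = r'$, the auxiliary parent of $r$ introduced in $T'$. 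By the construction of $T'$ this node $r'$ is exactly the root of $T'$, so $T'_{\textnormal{node}(\textnormal{root}(\mathcal{D}))} = T'$ and $S$ respects $T'$ as claimed. The degenerate cases $n \leq 2$ are already covered directly by Fact~\ref{fact:twoVar}, where $T' = T$ at the top and $\textnormal{node}$ maps the root to $r$ itself.

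Finally I would note that the clean-up operations at the end of Simulation~\ref{simulation:sdd} preserve both properties: merging all sinks labelled $\top$ (respectively $\bot$) into a single $\top$-sink (respectively $\bot$-sink), and deleting all nodes unreachable from $(\textnormal{root}(\mathcal{D}), \emptyset)$, change neither the syntactic SDD structure nor the represented function. The main obstacle here is not the corollary itself, whose proof is a one-line instantiation, but the bookkeeping already discharged inside Lemma~\ref{lemma:mainlemma}: one must be sure that the decomposition node of the root is correctly identified and that the $\vee$-case of $\textnormal{node}$ places the root of $S$ on top of the auxiliary leaf of $T'$, so that the simulated SDD spans the entire modified vtree rather than only the embedded copy of $T$.
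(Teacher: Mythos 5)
Your core argument is exactly the paper's own proof: the root of $S$ is $(\textnormal{root}(\mathcal{D}),\emptyset)$, and instantiating Lemma~\ref{lemma:mainlemma} at $u=\textnormal{root}(\mathcal{D})$ yields at once that $S=S_{(\textnormal{root}(\mathcal{D}),\emptyset)}$ is a syntactically correct SDD representing $f$; the paper says nothing more than this. The extra bookkeeping you add, however, contains an unjustified step: you assert that the output gate of the simple structured $d$-DNNF is an $\vee$-gate, but nothing in the definition of simple forces this, and the paper explicitly remarks that the output gate of a structured $d$-DNNF can be a conjunction; likewise, smoothness alone does not give $\vars{\textnormal{root}(\mathcal{D})}=X$ (a circuit with no $\vee$-gates is vacuously smooth yet may omit variables). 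If the root is an $\wedge$-gate, then $\textnormal{node}(\textnormal{root}(\mathcal{D}))$ is $r$ rather than the auxiliary parent $r'$, so $T'_{\textnormal{node}(\textnormal{root}(\mathcal{D}))}$ is a proper subtree of $T'$ and your identification of it with $T'$ fails. This does not endanger the corollary, because the point you were trying to secure is free: in Definition~\ref{def:sdds} the decomposition node $v$ of the output disjunction is only required to be \emph{some} inner node of the vtree, not its root, so by induction an SDD respecting a subtree of $T'$ also respects $T'$ itself. That monotonicity is what the paper tacitly uses when it reads the lemma's conclusion as ``w.r.t.\ $T'$'', and with it your one-line instantiation goes through in all cases, whether the output gate is an $\vee$-gate, an $\wedge$-gate, or a literal.
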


\begin{proof}
        The root of $S$ is given by the node $(\rootset{\mathcal{D}}, \emptyset)$ 
        described in Simulation \ref{simulation:sdd}. 
        Using Lemma \ref{lemma:mainlemma} we know that $S = S_{(\rootset{\mathcal{D}}, \emptyset)}$ 
        is a syntactically correct $\textnormal{SDD}$ w.r.t.\ the modified vtree $T'$ that represents the same
        Boolean function as the structured $d$-DNNF $\mathcal{D}$.
\end{proof}

\begin{theorem}\label{thm:transformation}
Let $f$ be a Boolean function such that $f$ and $\overline{f}$ can be represented by
structured \textup{$d$-DNNFs} $\mathcal{D}$ and $\overline{\mathcal{D}}$ w.r.t.\ the same normalized vtree.
Then, $f$ can also be represented by an \textnormal{SDD} of size
$\mathcal{O}((|\mathcal{D}|+|\overline{\mathcal{D}}|)^2)$.
\end{theorem}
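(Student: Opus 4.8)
The plan is to read the theorem off directly from the machinery already assembled: Simulation~\ref{simulation:sdd} produces a candidate SDD $S$, Corollary~\ref{corollary:syntactic} certifies its correctness, and Lemma~\ref{lemma:size} certifies its size. Before invoking these I would first record two harmless preprocessing steps on the hypotheses. The paper already assumes smoothness throughout, and a structured $d$-DNNF can be turned into an equivalent \emph{simple} one (no edges between $\vee$-gates, no $\vee$-gate feeding a constant, and $\bot$-labelled nodes the only ones representing $\bot$) with only a polynomial blow-up in size. Since neither transformation disturbs the underlying vtree, I may therefore assume without loss of generality that $\mathcal{D}$ and $\overline{\mathcal{D}}$ are smooth and simple and still respect the same normalized vtree $T$, which is exactly the input format that Simulation~\ref{simulation:sdd} requires.

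With the hypotheses in this shape I would apply Simulation~\ref{simulation:sdd} to $\mathcal{D}$ and $\overline{\mathcal{D}}$ to obtain $S$ together with the modified vtree $T'$, the latter being $T$ augmented by an auxiliary leaf placed above every inner node whose variable set has size more than two. Corollary~\ref{corollary:syntactic} then states that $S$ is a syntactically correct SDD with respect to $T'$ and represents $f$, while Lemma~\ref{lemma:size} bounds the number of nodes and edges of $S$ by $\mathcal{O}((|\mathcal{D}|+|\overline{\mathcal{D}}|)^2)$. Combining these two statements yields an SDD for $f$ of exactly the asserted size; and because the two normalizations above change the sizes only polynomially, the quadratic bound stated in terms of the original inputs is preserved.

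The hard part does not lie in this concluding step but upstream, in Lemma~\ref{lemma:mainlemma}, which is where the genuine verification happens and which I would establish by induction on the depth of $S_{(u,\emptyset)}$. The delicate points there are: (i) showing that at each $\vee$-node the prime children $(u',\emptyset)$ represent functions that form a \emph{partition}, which relies precisely on Proposition~\ref{propos:partition} applied to the shell restriction $p$ at the decomposition node $v$, so that the functions represented in $d$-$DNNF(\mathcal{D}_p,v)$ and $d$-$DNNF(\overline{\mathcal{D}}_p,v)$ are mutually disjoint and their disjunction is $\top$; (ii) confirming that the constant chosen in the second coordinate ($\top$ for nodes inherited from $\mathcal{D}$, and $\bot$ for the padding children drawn from $\overline{\mathcal{D}}$ and from the \emph{other} nodes of $R^{+}(\mathcal{D}_p,v)$) makes $S_{(u,\emptyset)}$ evaluate to $f_u$; and (iii) handling the $\wedge$-nodes so that the circuit strictly alternates $\vee$- and $\wedge$-gates while still respecting $T_{\textnormal{node}(u)}$. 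Granting Lemma~\ref{lemma:mainlemma}, the present theorem reduces to the bookkeeping described in the first two paragraphs.
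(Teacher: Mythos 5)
Your proposal is correct and takes essentially the same route as the paper: its proof of Theorem~\ref{thm:transformation} is precisely the three-step invocation of Simulation~\ref{simulation:sdd}, Corollary~\ref{corollary:syntactic}, and Lemma~\ref{lemma:size} that you describe, with all the real work delegated to Lemma~\ref{lemma:mainlemma} exactly as you note. The only caveat is that your closing claim that the quadratic bound ``in terms of the original inputs is preserved'' under the smoothing/simplification preprocessing is slightly too strong (a superlinear blow-up in that preprocessing would degrade the exponent); the paper sidesteps this by its standing assumption that the given circuits are already smooth, and simple for the simulation, so the stated bound is to be read relative to inputs in that form.
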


\begin{proof}
Using Simulation \ref{simulation:sdd} we construct $S$ for $f$ w.r.t.\ the modified vtree $T'$
given the structured $d$-DNNFs 
$\mathcal{D}$ and $\overline{\mathcal{D}}$ representing the functions $f$ and $\overline{f}$ w.r.t.\ the same
normalized vtree $T$.
Using Corollary \ref{corollary:syntactic} we know that $S$ is a syntactically correct SDD representing $f$.
Moreover, from Lemma \ref{lemma:size} we know that the size of $S$ is at most quadratic w.r.t.\
the sizes of $\mathcal{D}$ and $\overline{\mathcal{D}}$.
\end{proof}

\section*{Concluding Remarks}

To the best of our knowledge the question whether the complexity class that consists
of all Boolean functions representable by polynomial-size structured $d$-DNNFs is closed 
under negation is open. We have shown how negation w.r.t.\ the same normalized vtree 
is useful to construct SDDs. For the construction we extended the given normalized vtree
to a vtree with further auxiliary variables. It is open whether a similar construction exists 
without any auxiliary variables. 


\bibliographystyle{spmpsci}  

\bibliography{references}

\begin{thebibliography}{10}
\providecommand{\url}[1]{{#1}}
\providecommand{\urlprefix}{URL }
\expandafter\ifx\csname urlstyle\endcsname\relax
  \providecommand{\doi}[1]{DOI~\discretionary{}{}{}#1}\else
  \providecommand{\doi}{DOI~\discretionary{}{}{}\begingroup
  \urlstyle{rm}\Url}\fi

\bibitem{BLR13}
Beame, P., Li, J., Roy, S., Suciu, D.: Lower bounds for exact model counting
  and applications in probabilistic databases.
\newblock In: Proceedings of the Twenty-Ninth Conference on Uncertainty in
  Artificial Intelligence, UAI, pp. 157--162 (2013)

\bibitem{BL15}
Beame, P., Liew, V.: New limits for knowledge compilation and applications to
  exact model counting.
\newblock In: Proceedings of the Thirty-First Conference on Uncertainty in
  Artificial Intelligence, UAI, pp. 131--140 (2015)

\bibitem{BB19a}
Bollig, B., Buttkus, M.: On limitations of structured (deterministic) {DNNFs}.
\newblock Theory of Computing Systems  (2019).
\newblock \doi{10.1007/s00224-019-09960-w}

\bibitem{BB19}
Bollig, B., Buttkus, M.: On the relative succinctness of sentential decision
  diagrams.
\newblock Theory of Computing Systems \textbf{63}(6), 1250--1277 (2019)

\bibitem{Bov16}
Bova, S.: {SDDs} are exponentially more succinct than {OBDDs}.
\newblock In: Proceedings of the Thirtieth Conference on Artificial
  Intelligence, AAAI, pp. 929--935 (2016)

\bibitem{BCM16}
Bova, S., Capelli, F., Mengel, S., Slivovsky, F.: Knowledge compilation meets
  communication complexity.
\newblock In: Proceedings of the Twenty-Fifth International Joint Conference on
  Artificial Intelligence, IJCAI, pp. 1008--1014 (2016)

\bibitem{BS17a}
Bova, S., Slivovsky, F.: On compiling structured {CNFs} to {OBDDs}.
\newblock Theory of Computing Systems \textbf{61}(2), 637--655 (2017)

\bibitem{BS17}
Bova, S., Szeider, S.: Circuit treewidth, sentential decision, and query
  compilation.
\newblock In: Proceedings of the Thirty-sixth {ACM} {SIGMOD-SIGACT-SIGAI}
  Symposium on Principles of Database Systems, PODS, pp. 233--246 (2017)

\bibitem{BD15}
Van~den Broeck, G., Darwiche, A.: On the role of canonicity in knowledge
  compilation.
\newblock In: Proceedings of the Twenty-Ninth Conference on Artificial
  Intelligence, {AAAI}, pp. 1641--1648 (2015)

\bibitem{Bry91}
Bryant, R.: On the complexity of {VLSI} implementations and graph
  representations of {Boolean} functions with application to integer
  multiplication.
\newblock IEEE Transactions on Computers \textbf{40}(2), 205--213 (1991)

\bibitem{CD97}
Cadoli, M., Donini, F.: A survey on knowledge compilation.
\newblock AI Communications \textbf{10}(3, 4), 137--150 (1997)

\bibitem{Dar01}
Darwiche, A.: Decomposable negation normal form.
\newblock Journal of the ACM \textbf{48}(4), 608--647 (2001)

\bibitem{Dar01a}
Darwiche, A.: On the tractable counting of theory models and its application to
  truth maintenance and belief revision.
\newblock Journal of Applied Non-Classical Logics \textbf{11}(1-2), 11--34
  (2001)

\bibitem{Dar11}
Darwiche, A.: {SDD}: A new canonical representation of propositional knowledge
  bases.
\newblock In: Proceedings of the Twenty-Second International Joint Conference
  on Artificial Intelligence, {IJCAI}, pp. 819--826 (2011)

\bibitem{DM02}
Darwiche, A., Marquis, P.: A knowledge compilation map.
\newblock Journal of Artificial Intelligence Research \textbf{17}, 229--264
  (2002)

\bibitem{Mar15}
Marquis, P.: Compile!
\newblock In: Proceedings of the Twenty-Ninth Conference on Artificial
  Intelligence, {AAAI}, pp. 4112--4118 (2015)

\bibitem{OD15}
Oztok, U., Darwiche, A.: A top-down compiler for sentential decision diagrams.
\newblock In: Proceedings of the Twenty-Fourth International Joint Conference
  on Artificial Intelligence, {IJCAI}, pp. 3141--3148 (2015)

\bibitem{PD08}
Pipatsrisawat, K., Darwiche, A.: New compilation languages based on structured
  decomposability.
\newblock In: Proceedings of the Twenty-Third Conference on Artificial
  Intelligence, {AAAI}, pp. 517--522 (2008)

\bibitem{PD10}
Pipatsrisawat, K., Darwiche, A.: Top-down algorithms for constructing
  structured {DNNF}: Theoretical and practical implications.
\newblock In: Proceedings of the Nineteenth European Conference on Artificial
  Intelligence, {ECAI}, pp. 3--8 (2010)

\bibitem{Raz15}
Razgon, I.: Quasipolynomial simulation of {DNNF} by a non-determinstic
  read-once branching program.
\newblock In: Proceedings of the Twenty-first International Conference on
  Principles and Practice of Constraint Programming, {CP}, pp. 367--375 (2015)

\bibitem{SBB19}
Shih, A., Van~den Broeck, G., Beame, P., Amarilli, A.: Smoothing structured
  decomposable circuits.
\newblock CoRR \textbf{abs/1906.00311} (2019)

\bibitem{Vol99}
Vollmer, H.: Introduction to Circuit Complexity - A Uniform Approach.
\newblock Springer Science (1999)

\bibitem{Weg00}
Wegener, I.: Branching Programs and Binary Decision Diagrams: Theory and
  Applications.
\newblock SIAM Monographs on Discrete Mathematics and Applications (2000)

\bibitem{XCD12}
Xue, Y., Choi, A., Darwiche, A.: Basing decisions on sentences in decision
  diagrams.
\newblock In: Proceedings of the Twenty-Fourth Conference on Artificial
  Intelligence, {AAAI} (2012)

\end{thebibliography}

\appendix
\section*{Appendix: Proof of Lemma \ref{lemma:mainlemma}}

In the following proof we sometimes denote $S$ to be the Boolean function 
represented at the corresponding SDD $S$ by ease of notation. 
It will be clear from the context whether the 
SDD or the represented function is meant. Furthermore, $f_u$ and 
$\overline{f}_u$ denote the function represented by the subgraph of 
$\mathcal{D}$ or $\overline{\mathcal{D}}$, respectively, rooted at $u$.

\begin{proof}
We prove the lemma by induction on the depth $l$ of the subgraph $S_{(u, \emptyset)}$ of $S$.
Here, the depth is the length of a longest path to a leaf.

\textbf{Base case $(l \leq  2):$}

Since the depth of the subgraph $S_{(u,\emptyset)}$ is at most two 
and the given structured $d$-DNNFs are smooth,
$u$ has to be the root of an SDD w.r.t.\ a subtree of $T$ on at most two variables
in one of the given structured $d$-DNNFs.
Therefore, because of the first phase in Simulation \ref{simulation:sdd},
we can conclude that $(u, \emptyset)$ is the root of an SDD w.r.t.\ a subtree of $T'$ 
on at most two variables and $S_{(u,\emptyset)}$ represents the same function as 
$\mathcal{D}_u$ or $\overline{\mathcal{D}}_{u}$, respectively. Moreover, 
$\nodefunc{u}$ is $v$ in $T'$ if $\dnode{u}=v$.\\

\textbf{Induction hypothesis:}
Each subgraph $S_{(u,\emptyset)}$ of $S$ with depth of at most $l$
is a syntactically correct $\textnormal{SDD}$ w.r.t.\ the subtree $T'_{\textnormal{node}(u)}$.
Moreover, it represents the same Boolean function as 
$\mathcal{D}_u$ or $\overline{\mathcal{D}}_u$.\\

\textbf{Inductive step $(l \rightarrow l+1)$:}

\textbf{Case 1:} The node $(u, \emptyset)$ was added to $S$ because of the $\vee$-node 
                    $u \in V \cup \overline{V}$.

                    W.l.o.g.\ let $u \in V$. 
                    Let $v$ be $\dnode{u}$ in $T$ and 
                    let $p$ be a shell restriction w.r.t.\ $v$ such that $u\in R(\mathcal{D}_p,v)$.
                    The sets $R^+(\mathcal{D}_p,v)$ and $R^+(\overline{\mathcal{D}}_p,v)$ contain only
                    $\wedge$-nodes because of the definition of the sets, $\mathcal{D}$ 
                    and $\overline{\mathcal{D}}$ being simple, and the fact that $\vars{v}$ is at least three.
                    Therefore, $u$ is not in $R^+(\mathcal{D}_p,v)$ but the child nodes of $u$ are in the set.  

                    The node $(u, \emptyset)$ is an $\vee$-node which is connected to every $\wedge$-node 
                    $(u,u')$ for $u' \in R^+(\mathcal{D}_p,v) \cup R^+(\overline{\mathcal{D}}_p,v)$. 
                    These $\wedge$-nodes are connected to further nodes $(u',\emptyset)$ and $(u,\top)$ for
                    $(u,u')\in E$ or $(u,\bot)$ otherwise.
                    Thus, $S_{(u, \emptyset)}$ is 
                    inductively defined. Now, our aim is to show that 
                    $S_{(u, \emptyset)}$ is a syntactically correct SDD respecting the subtree 
                    rooted at $\nodefunc{u}$ in $T'$ that represents the same function as $\mathcal{D}_u$.
                    For this reason, we prove that the subgraphs rooted at $(u',\emptyset)$ in $S$
                    are syntactically correct SDDs w.r.t.\ the subtree $T'_{\nodefunc{u'}}$
                    which represent the same functions
                    as the ones represented by $\mathcal{D}_{u'}$ or $\overline{\mathcal{D}}_{u'}$, respectively.
                    Moreover, we have to show that they represent Boolean functions which form a partition.

                    The subgraph $S_{(u',\emptyset)}$ has at most depth $l-1$ for each 
                    $u' \in R^+(\mathcal{D}_p,v) \cup R^+(\overline{\mathcal{D}}_p,v)$ 
                    because by assumption $S_{(u, \emptyset)}$ is a subgraph of depth at most $l+1$ 
                    and $(u, \emptyset)$ is connected to each $(u', \emptyset)$ by a path of length two. 
                    Thus, by the use of the inductive hypothesis $S_{(u', \emptyset)}$ is a 
                    syntactically correct $\textnormal{SDD}$ w.r.t.\ the subtree rooted at 
                    $\nodefunc{u'}$ in $T'$.
                    We know that $\dnode{u'}=\dnode{u}=v$
                    since the given structured $d$-DNNFs are smooth. 
                    Moreover, $S_{(u', \emptyset)}$ represents the function $f_{u'}$
                    or $\overline{f}_{u'}$, respectively.

                    $S_{(u,\bot)}$ and $S_{(u, \top)}$ are $\textnormal{SDDs}$ 
                    representing $\bot$ and $\top$, respectively. Therefore, they respect the right subtree of 
                    $T_{v'}$, a leaf labeled by an auxiliary variable, where $v'$ is the parent of $v$
                    in $T'$. Therefore, $S_{(u, \emptyset)}$ respect $T_{v'}$. 

                    By induction hypothesis we know that $S_{(u', \emptyset)} = f_{u'}$
                    for each $u' \in R^+(\mathcal{D}_p,v)$ and 
                    $S_{(u', \emptyset)} = \overline{f}_{u'}$
                    for each $u' \in R^+(\overline{\mathcal{D}}_p,v)$. Using Proposition \ref{propos:partition}
                    and the fact that the given structured $d$-DNNFs are simple, we can conclude that all functions
                    $S_{(u', \emptyset)}$ for 
                    $u'\in R^+(\mathcal{D}_p,v) \cup R^+(\overline{\mathcal{D}}_p,v)$ form a partition.

Finally, we get the equivalence of $S_{(u, \emptyset)}$ and $f_u$ 
by applying the inductive hypothesis on $S_{(u', \emptyset)}$ for each  
$u' \in R^+(\mathcal{D}_p,v) \cup R^+(\overline{\mathcal{D}}_p,v)$.
\begin{eqnarray*}
S_{(u, \emptyset)}  &=& \bigvee_{\scriptsize{\substack{u' \,\in\, R^+(\mathcal{D}_p,v), \\ (u,u') \,\in\, E}}}
(S_{(u', \emptyset)} \wedge  S_{(u,\top)})  \,\vee\, \bigvee_{\scriptsize{\substack{u' \,\in\,  R^+(\mathcal{D}_p,v) 
\\ (u,u') \,\notin\, E}}}  (S_{(u', \emptyset)} \wedge  S_{(u,\bot)}) \,\vee\,
\bigvee_{\scriptsize{u' \,\in\,  R^+(\overline{\mathcal{D}}_p,v)}}
( S_{(u', \emptyset)}\wedge   S_{(u,\bot)})  \\
 &=& \bigvee_{\substack{u' \,\in\, R^+(\mathcal{D}_p,v), \\ (u,u') \,\in\, E}} 
       ( S_{(u', \emptyset)}  \wedge \top) \,\vee\, 
     \bigvee_{\substack{u' \,\in\, R^+(\mathcal{D}_p,v), 
\\ (u,u') \,\notin\, E}} ( S_{(u', \emptyset)}  \wedge \bot) \,\vee\,
     \bigvee_{u' \,\in\, R^+(\overline{\mathcal{D}}_p,v)}  ( S_{(u', \emptyset)}  \wedge \bot) \\
&=& \bigvee_{\substack{u' \,\in\, R^+(\mathcal{D}_p,v), \\ (u,u') \,\in\, E}} 
( S_{(u', \emptyset)}  \wedge \top)  = \bigvee_{(u,u') \,\in\, E} S_{(u', \emptyset)} 
\overset{\textnormal{(ind.)}}{=} \bigvee_{(u,u') \,\in\, E} 
f_{u'} \;=\; f_{u}
\end{eqnarray*}

\textbf{Case 2:} The node $(u, \emptyset)$ was added to $S$ because of the $\wedge$-node 
                    $u \in V \cup \overline{V}$.

                    W.l.o.g.\ let $u \in V$. 
                    Let $v$ be $\dnode{u}$ in $T$ and  $v_\ell$ be the left child of $v$
                    in $T$.
                    Furthermore, let $p$ be a shell restriction w.r.t.\ $v_\ell$ 
                    for which there exists an assignment $r'$
                    to the remaining variables such that there is a $1$-certificate in $\mathcal{D}$ 
                    for the joint assignment 
                    $p+r'$ to the $x$-variables that contains the node $u_\ell$.
                    Moreover, let $R^{++}(\mathcal{D}_p, v_\ell)$ contain all nodes
                    from $R^+(\mathcal{D}_p, v_\ell)$ without $u_\ell$ or any child node of $u_\ell$.

                    The node $(u,\emptyset)$ is connected to the $\wedge$-node $(u,\wedge)$ 
                    and to all $\wedge$-nodes $(u,u')$ for $u'\in 
                    R^{++}(\mathcal{D}_p, v_\ell)\cup R^+(\overline{\mathcal{D}}_p, v_\ell)$. The node $(u,\wedge)$
                    is connected to $(u_\ell, \emptyset)$ and $(u_r, \emptyset)$, the other $\wedge$-nodes
                    are connected to $(u',\emptyset)$ and $(u,\bot)$.

                    $S_{(u,\bot)}$ represents the function $\bot$ and is an SDD w.r.t.\ any vtree.
                    The subgraphs $S_{(u_\ell,\emptyset)}$, $S_{(u_r,\emptyset)}$, and
                    $S_{(u',\emptyset)}$ have at most depth $l-1$ 
                    because by assumption $S_{(u, \emptyset)}$ is a subgraph of depth at most $l+1$
                    and $(u, \emptyset)$ is connected to the nodes by paths of length two. 
                    Thus, by the use of the inductive hypothesis $S_{(u_\ell, \emptyset)}$,
                    $S_{(u_r,\emptyset)}$, and $S_{(u',\emptyset)}$ are 
                    syntactically correct $\textnormal{SDDs}$ w.r.t.\ subtrees rooted at 
                    $\nodefunc{u_\ell}$, $\nodefunc{u_r}$, and $\nodefunc{u'}$, respectively, in $T'$.
	            Moreover, $\nodefunc{u_\ell}$ and $\nodefunc{u'}$ are in the left subtree of $v$
                    and $\nodefunc{u_r}$ is in the right subtree of $v$.
                    The subgraph $S_{(u,\wedge)}$ represents the conjunction of the functions represented at
                    $S_{(u_\ell,\emptyset)}$ and $S_{(u_r,\emptyset)}$ and 
                    $S_{(u,u')}$ represents the conjunction of 
                    $S_{(u',\emptyset)}$ and $S_{(u,\bot)}$.  Therefore, 
                    $S_{(u,\wedge)}$ and $S_{(u',\emptyset)}$ respect $v$ in $T'$ and as a consequence also 
                    $S_{(u, \emptyset)}$ respect $v$.

                    If $u_\ell$ is an $\wedge$-node, $u_\ell\in R^+(\mathcal{D}_p,v_\ell)$. Therefore, 
                    $u_\ell \cup R^{++}(\mathcal{D}_p,v_\ell)$ is equal to 
                    $R^{+}(\mathcal{D}_p,v_\ell)$. We know that
                    $R^+(\mathcal{D}_p, v_\ell)\cup R^+(\overline{\mathcal{D}}_p, v_\ell)$ form a partition
                    because of Proposition \ref{propos:partition} and $\mathcal{D}$ being simple.
                    By the induction hypothesis the same holds for the functions represented by
                    $S_{(u_\ell,\emptyset)}$ and $S_{(u',\emptyset)}$ for $u'$ in 
                    $R^{++}(\mathcal{D}_p, v_\ell)\cup R^+(\overline{\mathcal{D}}_p, v_\ell)$.
                    If $u_\ell$ is an $\vee$-node, the function represented at $u_\ell$ is the disjunction
                    of the functions represented at the child nodes of $u_\ell$. 
                    Because of Proposition \ref{propos:d-DNNF} and $\mathcal{D}$ being simple
                    we know that all functions represented at a node 
                    in $\{u_\ell\} \cup R^{++}(\mathcal{D}_p, v_\ell)$ are mutually disjoint. The disjunction of these
                    functions is equal to $f_{A,p}$ where $A=shell(v_\ell)$. Now,
                    because of Proposition \ref{propos:partition} and $\mathcal{D}$ as well as 
                    $\overline{\mathcal{D}}$ being simple
                    we can conclude that
                    all functions represented at a node 
                    $\{u_\ell\} \cup R^{++}(\mathcal{D}_p, v_\ell)\cup R^+(\overline{\mathcal{D}}_p, v_\ell)$ 
                    form a partition.
                    By the induction hypothesis the same holds for the corresponding functions in $S$.

Finally, we get the equivalence of $S_{(u, \emptyset)}$ and $f_u$ 
by applying the inductive hypothesis on $S_{(u_\ell, \emptyset)}$ and $S_{(u_r, \emptyset)}$.
\begin{eqnarray*}
S_{(u, \emptyset)}  
&=& 
(S_{(u_\ell, \emptyset)} \wedge  S_{(u_r,\emptyset)})  \,\vee\, \bigvee_{u' \,\in\,  R^{++}(\mathcal{D}_p,v_\ell)}
(S_{(u', \emptyset)} \wedge  S_{(u,\bot)}) \,\vee\,
\bigvee_{u' \,\in\,  R^+(\overline{\mathcal{D}}_p,v_\ell)}
( S_{(u', \emptyset)}\wedge   S_{(u,\bot)})  \\
&=& (S_{(u_\ell, \emptyset)} \wedge  S_{(u_r,\emptyset)})  \,\vee\, \bigvee_{u' \,\in\,  R^{++}(\mathcal{D}_p,v_\ell)}
(S_{(u', \emptyset)} \wedge  \bot) \,\vee\,
\bigvee_{u' \,\in\,  R^+(\overline{\mathcal{D}}_p,v_\ell)}
( S_{(u', \emptyset)}\wedge   \bot))  \\
&=& (S_{(u_\ell, \emptyset)} \wedge  S_{(u_r,\emptyset)})  
\overset{\textnormal{(ind.)}}{=} 
f_{u_\ell} \wedge f_{u_r} \;=\; f_{u}
\end{eqnarray*}

\end{proof}

\end{document}